\newtheorem{theorem}{Theorem}[section]
\newtheorem{lemma}[theorem]{Lemma}
\newtheorem{claim}[theorem]{Claim}
\newtheorem{observation}[theorem]{Observation}
\theoremstyle{definition}
\newtheorem{definition}[theorem]{Definition}
\newcommand{\cA}{{\mathcal A}}
\newcommand{\cB}{{\mathcal B}}
\newcommand{\cC}{{\mathcal C}}
\newcommand{\cD}{{\mathcal D}}
\newcommand{\cH}{{\mathcal H}}
\newcommand{\cI}{{\mathcal I}}
\newcommand{\cX}{{\mathcal X}}
\newcommand{\cY}{{\mathcal Y}}
\newcommand{\cT}{{\mathcal T}}
\newcommand{\cR}{\mathcal{R}}
\newcommand{\eqdef}{\mathbin{\stackrel{\rm def}{=}}}
\newcommand{\R}{{\mathbb R}}
\newcommand{\N}{{\mathbb{N}}}
\newcommand{\zo}{\{0,1\}}
\newcommand{\E}{\mathbb{E}}
\newcommand{\eps}{\varepsilon}
\DeclareMathOperator*{\argmin}{arg\,min}
\DeclareMathOperator*{\argmax}{arg\,max}
\newcommand{\ds}{\mathrm{DS}}
\newcommand{\optIslands}{\mathrm{OptIsland}}
\newcommand{\dspre}{\ensuremath{\cB_{\text{pre}}}}
\newcommand{\dsquery}{\ensuremath{\cB_{\text{query}}}}
\newcommand{\hull}{\ensuremath{\text{Hull}}}
\newcommand{\ckgons}{\ensuremath{\cC_k}}
\newcommand{\inducedhp}{\ensuremath{\cI_N}}
\newcommand{\refset}{\ensuremath{\cR_N}}
\newcommand{\cconvex}{\ensuremath{\cC_{\text{conv}}}}
\newcommand{\Nislands}{\ensuremath{\cI_N}}
\newcommand{\learner}{\cA}
\newcommand{\disc}{\mathit{disc_S}}
    \newcommand{\inlinenotes}[3]{{\color{#2} [{\bf {#1}:} {\sf {#3}}]}}
    \newcommand{\sr}[1]{{\color{blue} #1}}
    \newcommand{\srnote}[1]{{\color{blue}\footnote{{\color{blue} {\textbf{ S:}} #1}}}}
    \newcommand{\talya}[1]{{\color{purple} #1}}
     \newcommand{\tnote}[1]{{\color{purple}\footnote{{\color{purple} {\textbf{ Talya:}} #1}}}}
    \newcommand{\lnote}[1]{{\color{red}\footnote{{\color{red} {\textbf{ Luda:}} #1}}}}
    \newcommand{\inlinenotes}[3]{}
    \newcommand{\sr}[1]{{#1}}
    \newcommand{\talya}[1]{{#1}}
    \newcommand{\srnote}[1]{}
    \newcommand{\tnote}[1]{}
    \newcommand{\lnote}[1]{}
\newcommand{\sI}{\tilde{I}}
\newcommand{\AL}{\mu_{L}}
\newcommand{\pTp}[1][]{
   \ifthenelse{ \equal{#1}{} }
      {\ensuremath{p^{\scriptscriptstyle{+}}_{\scriptscriptstyle{T}}}}
      {\ensuremath{p^{\scriptscriptstyle{+}}_{\scriptscriptstyle{#1}}}}
}
\newcommand{\hpTp}[1][]{
   \ifthenelse{ \equal{#1}{} }
      {\ensuremath{\hat{p}^{\scriptscriptstyle{+}}_{\scriptscriptstyle{T}}}}
      {\ensuremath{\hat{p}^{\scriptscriptstyle{+}}_{\scriptscriptstyle{#1}}}}
}
\newcommand{\pTm}[1][]{
   \ifthenelse{ \equal{#1}{} }
      {\ensuremath{p^{\scriptscriptstyle{-}}_{\scriptscriptstyle{T}}}}
      {\ensuremath{p^{\scriptscriptstyle{-}}_{\scriptscriptstyle{#1}}}}
}
\newcommand{\hpTm}[1][]{
   \ifthenelse{ \equal{#1}{} }
      {\ensuremath{\hat{p}^{\scriptscriptstyle{-}}_{\scriptscriptstyle{T}}}}
      {\ensuremath{\hat{p}^{\scriptscriptstyle{-}}_{\scriptscriptstyle{#1}}}}
}
\newcommand{\sets}{\frac{c_1}{\eps^{2.5}} \ln\frac{1}{\eps}}
\newcommand{\setn}{\frac{c_2}{\eps^{1.5}}} %the final value of $n$. 
\newcommand{\sampleComplexityNoDelta}{\frac{1}{\eps^{2.5}} \ln\frac{1}{\eps}}
\newcommand{\runningTimeNoDelta}{\frac{1}{\eps^{5}}\ln^2\frac{1}{\eps}} 
\newcommand{\sampleComplexity}{\frac{1}{\eps^{2.5}} \ln\frac{1}{\eps}\cdot\ln \frac 1 \delta}
\newcommand{\runningTime}{\frac{1}{\eps^{5}} \ln^2\frac{1}{\eps}\cdot\ln \frac 1 \delta +\frac{1}{\eps^{2}}\ln\frac{1}{\eps}\ln^{2}\frac{1}{\delta}}
\newcommand{\old}[1]{\textcolor{olive}{}}
\title{Fast Agnostic Learners in the Plane}
\author{First Author \& Second Author \\
Affiliation \\
\texttt{email@domain} \\
}
\author{Talya Eden \\ \small{Bar Ilan University} \\ \small{\texttt{talyaa01@gmail.com}} \and Ludmila Glinskih \\ \small{Google} \\ \small{\texttt{lglinskih@gmail.com}} \and Sofya Raskhodnikova \\ \small{Boston University} \\ \small{\texttt{sofya@bu.edu}}}
\date{}
\begin{document}
\maketitle

\begin{abstract}
We investigate the computational efficiency of agnostic learning for several fundamental geometric concept classes in the plane. While the sample complexity of agnostic learning is well understood, its time complexity has received much less attention. We study the class of triangles and, more generally, the class of  convex polygons with $k$ vertices for small $k$, as well as the class of convex sets in a square. We present a proper agnostic learner for the class of triangles that has optimal sample complexity and runs in time $\tilde O({\epsilon^{-6}})$, improving on the algorithm of Dobkin and Gunopulos (COLT `95) that runs in time $\tilde O({\epsilon^{-10}})$.  For 4-gons and 5-gons, we improve the running time from $O({\epsilon^{-12}})$, achieved by Fischer and Kwek (eCOLT `96), to $\tilde O({\epsilon^{-8}})$ and $\tilde O({\epsilon^{-10}})$, respectively.

We also design a proper agnostic learner for convex sets under the uniform distribution over a square with running time $\tilde O({\epsilon^{-5}})$, improving on the previous $\tilde O(\epsilon^{-8})$ bound at the cost of slightly higher sample complexity. Notably, agnostic learning of convex sets in $[0,1]^2$ under general distributions is impossible because this concept class has infinite VC-dimension. Our agnostic learners use data structures and algorithms from computational geometry and their analysis relies on tools from
geometry and probabilistic combinatorics.
Because our learners are proper, they yield tolerant property testers with matching running times. Our results raise a fundamental question of whether a gap between the sample and time complexity is inherent for agnostic learning of these and other natural concept classes.

\end{abstract}

\section{Introduction}\label{sec:intro}

The agnostic learning framework, introduced by Haussler \cite{Haussler92} and Kearns, Schapire, and Sellie \cite{KearnsSS94}, elegantly models learning from noisy data. It has had a tremendous impact, driving advances in both theory and practice. Theoretically, it has deepened our understanding of learning theory, data compression, computational complexity, property testing, and more. Practically, it underpins robust algorithms for image recognition, signal processing, adversarial learning, and beyond. While the sample complexity  of agnostic learning is well understood, its time complexity remains underexplored.

Halfspaces are among the most fundamental concept classes in theory and practice, but their limited expressiveness has led to interest in richer models. For instance, Kantchelian et al. \cite{kantchelian2014large} showed that convex polytope classifiers outperform hyperplanes in both  speed and accuracy. However, expressive classes in high dimensions face a major  obstacle: learning is computationally hard, even for halfspaces.  While PAC learning of $d$-dimensional halfspaces is efficient in the realizable setting, it is NP-hard  agnostically  (\cite{GuruswamiR09, FeldmanGKP09, FeldmanGRW12}). In contrast, in 2D, halfplanes admit efficient agnostic learners with optimal sample and time complexity $\widetilde{O}(\frac{1}{\epsilon^2} \log \frac{1}{\delta})$ (\cite{MathenyP21}). 
This raises a natural question: do other expressive \emph{2D} classes -- polygons and convex sets -- admit comparably efficient agnostic learners? More broadly, this points to a theme: \emph{low-dimensional geometric classes may offer algorithmic advantages that are not yet well understood}.

We investigate the computational efficiency of  agnostic learners for several fundamental geometric concept classes in the plane, namely, of triangles, convex $k$-gons, and convex sets in a square. These classes arise naturally in computer vision, image analysis, and shape recognition, where data is often low-dimensional and spatially structured. Moreover, they serve as canonical examples in geometric learning theory, providing a clean yet expressive setting for studying the trade-offs between sample complexity and computational efficiency.
We focus on {\em proper} agnostic learners—those that output hypotheses from the concept class being learned—as this is crucial for our application to tolerant property testing. All learners we discuss, including those from prior work, are proper.

The class of $k$-gons (that is, convex polygons with $k$ vertices) has VC-dimension $2k + 1$ (\cite{DobkinG95}), and thus, by standard VC-dimension bounds, the sample complexity of agnostically PAC-learning this class is $s=\Theta(\frac 1{\eps^2}(k +\ln \frac 1 {\delta}))$.
The running time of agnostically learning $k$-gons has been investigated by Fischer~\cite{Fischer95}, Dobkin and Gunopulos~\cite{DobkinG95}, and Fischer and Kwek~\cite{FischerK96}. 

Specifically, Dobkin and Gunopulos~\cite{DobkinG95} designed a (proper) agnostic learner for $k$-gons with running time $O(s^{2k-1}\log s)$, where $s$ is the size of the sample. For triangles (the case $k=3$), the running time has $\widetilde O(\frac 1{\eps^{10}})$ dependence on the loss parameter $\eps$, which, to our knowledge, is currently the best bound for this case. Prior to our work, the best bound on the running time for larger $k$ was $O(ks^6)$ in terms of the sample size $s$, by Fischer and Kwek~\cite{FischerK96}. Thus, the dependence on $\eps$ in the running time was $O(\frac 1 {\eps^{12}})$ for 4-gons and 5-gons. 
Notably, 
agnostic learners for these basic concept classes suffered from high running times.
We give  
a proper
agnostic PAC learner for $k$-gons for constant $k$ with {\em optimal sample complexity} and running time $\tilde{O}(\frac 1 {\eps^{2k}}\log\frac 1\delta+\frac 1{\eps^2}\log^2\frac 1 \delta)$. This improves the dependence on $\eps$ to $\tilde{O}(\frac 1 {\eps^{6}})$ for triangles, $\tilde{O}(\frac 1 {\eps^{8}})$ for 4-gons, and $\tilde{O}(\frac 1 {\eps^{10}})$ for 5-gons, making progress on long-standing open questions.

We also study agnostic learning of general convex sets in the plane under the uniform distribution over a unit 
\ifnum\iclr=1
square.
\else
square.\footnote{The size of the square is fixed for simplicity of presentation. Our results apply to a square of any size, since rescaling does not affect how well the figure is captured by a convex set.}
\fi
This concept class, denoted $\cconvex$, has infinite VC-dimension and thus is not PAC learnable under general distributions. 
Motivated by
property testing of images, Berman, Murzabulatov, and Raskhodnikova~\cite{BermanMR22} gave 
a (proper) agnostic learner for $\cconvex$ under the uniform distribution over a square. Their learner has
sample complexity $O(\frac{1}{\eps^2}\log\frac{1}{\eps\delta})$ and time complexity $\widetilde O(\frac{1}{\eps^8}+\frac 1 {\eps^7}\log^2\frac 1 {\delta})$. 
Moreover, they showed that $\Omega(\frac 1 {\eps^2})$ samples are required for this task.
We design 
a (proper) agnostic learner for this task with  
a significantly improved running time at the expense of a small overhead in
sample complexity. Our learner takes a sample of size $\widetilde O(\frac{1}{\eps^{2.5}}\log\frac{1}{\delta})$ and runs in time $\widetilde O(\frac{1}{\eps^5}\log\frac{1}{\delta}+\frac{1}{\eps^{2}}\log^2\frac{1}{\delta})$. It leaves an intriguing question of whether there is an inherent tradeoff between the sample complexity and the running time of agnostic learning of this class. Our results on proper agnostic learners are summarized and compared to previous work in Table~\ref{table:results}.

\renewcommand{\arraystretch}{1.5}
\begin{table}[h]
\centering
\begin{tabular}{|l|l|c|c|c|c c|
}
\hline
& \textbf{} 
& \textbf{Triangles} 
& \textbf{4-gons} 
& \textbf{5-gons} 
%& \textbf{$k$-gons} 
& \multicolumn{2}{|c|}{\textbf{$\cconvex$}} \\
\noalign{\hrule height 1.2pt}
\multirow{2}{*}{\parbox{10mm}{\textbf{Prior work}}}
& Sample complexity & \multicolumn{3%4
}{|c|}{ $s = \Theta(%k
\epsilon^{-2})$ {
\ifnum\iclr=1
\color{blue} 
\fi
\;(by VC-dimension)}} 
& $\tilde{O}(\epsilon^{-2})$ &\multirow{2}{*}{
\color{blue} \cite{BermanMR22}
}
\\
\cline{2-5}
& Runtime 
& ${O}(\epsilon^{-10})$ 
{\color{blue} \cite{DobkinG95}} 
& \multicolumn{2%3
}{|c|}{ $O(\epsilon^{-12})$ {
\color{blue} \cite{FischerK96}
}} 
& $\tilde O(\epsilon^{-8})$ & \\
\noalign{\hrule height 1.2pt}
\multirow{2}{*}{\parbox{10mm}{\textbf{Our results}}}
& Sample complexity & \multicolumn{3%4
}{|c|}{ $s = \Theta(\epsilon^{-2})$ } & $\tilde O(\epsilon^{-2.5})$ &\\
\cline{2-7}
& Runtime  
& $\tilde{O}(\epsilon^{-6})$ 
& $\tilde{O}(\epsilon^{-8})$ 
& $\tilde{O}(\epsilon^{-10})$ 
%& $\tilde{O}(\epsilon^{-2k})$ 
& $\tilde{O}(\epsilon^{-5})$ &\\
\noalign{\hrule height 1.2pt}
\end{tabular}\label{table:results}
\vspace{.2cm}
\caption{Comparison \sr{of complexity of proper agnostic learners for concept classes we study,} 
stated for constant $\delta$. The class $\cconvex$ is the class of convex sets under the uniform distribution in $[0,1]^2$.}
\end{table}

Following the established connection between PAC learning and property testing \cite{GGR98,BermanMR22}, our proper agnostic learners yield tolerant property testers for the geometric classes we study, with the same running times. 
\ifnum\iclr=0
Property testing is, in some sense, a task complementary to learning: 
in contrast to PAC learning, which assumes the concept belongs to a specified class, 
a property testing algorithm must distinguish (with high probability) whether the concept is from a specified class 
or is  ``far'' in terms of its probability mass from any concept in the class.  A tolerant 
tester must distinguish inputs that are close to the class from those that are far. Our results imply tolerant testers for 
triangles, 4-gons, 5-gons, and convex sets (or images) in the plane with improved running times. 
\fi
This is discussed in detail in 
 Appendix~\ref{sec:property-testing}. The fact that our learners are proper
 is crucial for this application.

\subsection{Our Techniques}\label{sec:techniques}

\ifnum\iclr=1
\noindent\textbf{Two-sample framework.}
Both of our learners follow the same high-level template: they draw a small sample $N$
to generate a compact \emph{reference family} of candidate hypotheses, and a larger
sample $S$ to evaluate their empirical risk. Set $N$ could be a subset of $S$ (as in our result on $k$-gons) or an independent sample (as in our result on general convex sets). The key is to prove that this restricted
family still contains a near-optimal hypothesis, so that empirical risk minimization
over it gives strong guarantees. This design improves running time by decoupling
the (expensive) hypothesis construction from the (cheaper) risk evaluation.

\smallskip
\noindent\textbf{Agnostic learner for $k$-gons.}
Our learner generates all halfplanes induced by pairs of points in sample $N$ and intersect $k$ of
them to form candidate $k$-gons. To evaluate each candidate, the learner triangulates it and
applies the triangle range-counting data structure of~\cite{GoswamiDN04}.

 Using this building block, the learner efficiently computes the asymmetric discrepancy (see Definition~\ref{def:asymmetric-discrepancy}) with respect to $S$ of every reference triangle, 
 which later yields the asymmetric discrepancy of the reference $k$-gons.
 The analysis 
relies on the elegant reference halfplane construction of~\cite{matheny2018practical}: we lift their $\varepsilon$-net for halfplanes
(via a union bound
over the $k$ sides)
to show that the candidate family induced by $N$ forms an
$\varepsilon$-reference set for $k$-gons. Thus, even though the search space is
drastically smaller than the set of all $k$-gons, it suffices to approximate the best
one, leading to optimal sample complexity and improved dependence on~$\varepsilon$.
\else

Our improvements in running time stem from new constructions of {\em reference concepts} for the classes we are learning. We consider two   samples, $N$ and $S$.

A smaller sample $N$ is used for constructing {\em reference concepts} from the class, and a larger sample $S$ is then used for  selecting a reference concept that approximately minimizes the empirical risk on these concepts. 
\sr{Set $N$ could be a subset of $S$ (as in our result on $k$-gons) or an independent sample (as in our result on general convex sets.} 
One of the key steps in the analysis is proving that the small set of {\em reference concepts} we consider is sufficient for agnostic learning.

\paragraph{Agnostic learner for $k$-gons.}
Our proper agnostic learner for $k$-gons %\old{triangles (and, more generally, for 4-gons and 5-gons)} 
constructs halfspaces induced by the smaller sample set $N$ and obtains the set of reference $k$-gons 
by intersecting induced halfplanes.\footnote{We use $N$ to denote the smaller sample since it serves a role analogous to an $\eps$-net.  Our set $N$ is slightly smaller than standard $\eps$-nets, studied by \cite{BlumerEHW86,KomlosPW92,PachT13}. This is because our set of reference concepts has to approximate only the optimal hypothesis (with sufficient probability) rather than all hypotheses in the class.}
Dobkin and Gunopulos~\cite{DobkinG95} also used reference $k$-gons that were formed by intersecting induced halfplanes, but they constructed all such $k$-gons 
 from the large (and only) sample $S$. 
 Our improvement in the running time 
 of the learner 
 comes from using a smaller reference set. 
To enable this speedup, 
 we 
start from a construction by Matheny and Phillips~\cite{MathenyP18} of a good set of reference halfplanes. We prove that using this reference set for halfplanes in the construction of reference $k$-gons results in a good reference set for $k$-gons.

To compute the empirical risk of the  reference $k$-gons, our learner first triangulates them to obtain reference triangles. Then it uses the data structure for {\em triangle range counting} designed by Goswami, Das, and Nandy~\cite{GoswamiDN04}.  A data structure for triangle range counting preprocesses a set $P$ of points in the plane and then supports quickly counting how many of them lie within a given query triangle.
Using this building block, the learner efficiently computes the asymmetric discrepancy (see Definition~\ref{def:asymmetric-discrepancy}) of every reference triangle. 
Finally, our learner obtains the asymmetric discrepancy of every reference $k$-gon  from the asymmetric discrepancies of the reference triangles in its triangulation and returns the reference $k$-gon with the largest asymmetric discrepancy (or, equivalently, the smallest empirical risk). The main challenge in the analysis is to show that this gives a good approximation to the optimal $k$-gon.
\fi

\paragraph{Agnostic learner for convex sets.}
\ifnum\iclr=0
Recall that we obtain a proper agnostic learner for convex sets with a significantly improved running time and a small overhead in the sample complexity compared to the bounds in prior work, proved  by
Berman, Murzabulatov, and Raskhodnikova~\cite{BermanMR22}.
Our learner is also much simpler than the learner designed by Berman et al.
\fi

Our  learner for convex sets uses the smaller sample $N$ to implicitly construct {\em islands}. An {\em island} induced by $N$ is a set formed by intersecting $N$ with a convex set.
Our learner then evaluates the empirical risk of islands
with respect to the larger sample $S$. 
As in the case of $k$-gons, the learner uses the  algorithm of Goswami, Das, and Nandy~\cite{GoswamiDN04} to construct a data structure that 
quickly computes asymmetric discrepancy on each queried triangle. 
Since the islands  considered are     induced by a small set, 
our learner can
quickly choose the island with smallest asymmetric discrepancy by running the algorithm of Bautista-Santiago et al.~\cite{Bautista-SantiagoDLPUV11}. 
Given access to  triangle queries,
their  algorithm uses dynamic programming to find an island with maximum asymmetric discrepancy, implying minimum empirical risk. 
In contrast to our simple randomized construction of islands, the set of reference polygons used in 
prior work by Berman, Murzabulatov, and Raskhodnikova~\cite{BermanMR22} is quite complicated: at a high level, it is obtained (deterministically) by taking axis-parallel reference rectangles and iteratively chipping away triangles from the corners of current polygons.

To analyze our learner for convex sets, we show that any convex set (including the optimal set) is closely approximated by the largest island that fits inside it.
The behavior of the ``missing area'' between a convex set and the convex hull of a uniform sample of a given size inside the set has been extensively studied (see, e.g., \cite{HarPeled} and the survey in~\cite{Barany07}). We utilize the concentration result of Brunel~\cite{Brunel2017} that demonstrates that if $\ell$ points are sampled from a convex set then the fraction of the missing area is concentrated around $\Theta(\ell^{-2/3})$. It allows us to show that, with high probability, there is an island induced by the sample $N$ that 
 closely approximates the optimal convex set.

In the second part of the analysis, we bound the size of the sample $S$ required to estimate the empirical risk of the islands induced by $N$.
Since the class of convex sets has infinite VC-dimension, 
standard uniform convergence results do not apply.

Instead, we first
use the concentration bounds of Valtr~\cite{valtr1994probability,valtr1995probability} regarding the maximum convex set of points in a uniform sample from $[0,1]^2$
to show that all islands are likely to have $O(\sqrt[3]{|N|})$ vertices in their convex hulls.
Then we apply uniform convergence to all convex sets on $O(\sqrt[3]{|N|})$ vertices to get a bound on the size of a representative sample.

\ifnum\iclr=1
\section{Implications for Property Testing}\label{sec:property-testing}
\else
\subsection{Implications for Property Testing}\label{sec:property-testing}
\fi
Agnostic learning is tightly related to computational tasks first investigated in by Parnas, Ron, and Rubinfeld~\cite{ParnasRR06} and studied in property testing: distance approximation and tolerant testing.

\begin{definition}[Distance Approximation]\label{def:dist-approx}
Let $\cC$ be a class of functions $f:\cX\to \cY$ (also referred to as a {\em property}). 
A {\em distribution-free distance approximation} algorithm $\cA$ for $\cC$ is given sample access to an unknown distribution $\cD_{\cX}$ on the domain $X$ and query access to an input function $f:\cX\to \cY$, as well as parameters $\eps,\delta\in(0,1).$ Let $\cD$ denote the joint distribution on $(x,y)\in \cX\times \cY$, where $x\sim \cD_{\cX}$ and $y=f(x)$. Algorithm $\cA$ must return a number $\hat{d}$ such that $dist_{\cD}(f) - \eps \leq \hat{d} \leq dist_\cD(f)+ \eps$ with probability at least $1-\delta$. The query complexity of $\cA$ is the number of queries it makes to $f$ in the worst case over the choice of $\cD_{\cX}$ and $f\in \cC$.
\end{definition}
A {\em tolerant tester} gets the same inputs as the distance approximation algorithm, with an additional parameter $\eps_0\in(0,\eps)$, and it has to accept if $dist_{\cD}(f)\leq \eps_0$ and reject if $dist_{\cD}(f)\geq \eps$, both with probability at least $1-\delta$. Distance approximation and tolerant testing have closely related query complexity, as stated, for example, by Parnas, Ron, and Rubinfeld~\cite{ParnasRR06} and Pallavoor, Raskhodnikova, and Waingarten~\cite[Theorem 5.1]{PallavoorRW22}. Most work in property testing considers the special case of the distribution-free version of these problems, where the marginal distribution $\cD_\cX$ is fixed to be uniform over $\cX$.

As proved by Goldreich, Goldwasser, and Ron~\cite{GGR98}, a proper PAC learning algorithm for a class ${\cC}$
with sample complexity $s(\eps)$ implies a tester for property $\cC$ that makes $s(\eps/2) + O(1/\eps)$ queries to the input. There is an analogous implication from proper agnostic PAC learning to distance approximation with an additive overhead of $O(1/\eps^2)$ instead of $O(1/\eps)$\srnote{We need to say something about the running time to reach the conclusion in the next sentence.}. Therefore, our agnostic PAC learners imply distance approximation algorithms with the same sample and time complexity. Specifically, we can estimate the distance to the nearest $k$-gon with constant error probability in time $\tilde{O}(\frac 1 {\eps^{2k}})$ and the distance to the nearest convex set in time $\tilde{O}(\frac 1 {\eps^{5}})$.

\subsection{Open problems and future directions} 

Our agnostic learners have better runtime than previously know, in some cases improving decades-old classical algorithms. However, their running time is still larger than their sample complexity. It is an interesting (and difficult) open problem to  either improve the running time or to justify this discrepancy by  proving computational hardness, for example, by using the tools from fine-grained complexity. For the case of agnostic learners of $k$-gons, we were able to improve the running time for triangles, 4-gons, and 5-gons. For larger constant $k$, the best known running time is $\widetilde O(\frac 1 {\eps^{12}})$, and it is open whether it can be improved.  For the case of agnostic learning of convex sets, our algorithm is faster than that of  %Berman et al.~
\cite{BermanMR22}, but has slightly worse sample complexity. 
It is open whether it is possible to get a fast algorithm with optimal sample complexity%
\ifnum\iclr=0
\ or there is some intrinsic trade off between the running time and the sample complexity of this problem%
\fi
.

\textbf{Future directions.}
Although our results focus on two dimensions -- where efficient algorithms are more within reach -- gaining a precise understanding of these settings is a crucial step toward addressing higher-dimensional cases.  
Another direction is using well-established fine-grained complexity assumptions to prove computational hardness for the problems we consider. In that vein, Ferreira Pinto Jr., Palit, and Raskhodnikova~\cite{FerreiraPRV26}
recently provided a justification of the gap between the sample and time complexity of distribution-free distance approximation (see Definition~\ref{def:dist-approx}) for the class of halfspaces in $\mathbb{R}^d$  for $d\geq 4$. They showed that, assuming the $k$-SUM conjecture, every algorithm for this problem must have running time $\eps^{-\lceil (d+1)/2\rceil +o(1)}$. By the discussion in Section~\ref{sec:property-testing}, it implies the same bound on the running time for proper agnostic PAC learners of halfspaces.
 
Yet another natural direction is to relax the requirement of properness: while proper learners are particularly valuable due to their connections to property testing and related applications, it remains an intriguing open question whether improper learners can be more efficient.

\subsection{Related Work}\label{sec:related-work}

\ifnum\iclr=1
Agnostic PAC learning, introduced by %Haussler 
\cite{Haussler92} and %Kearns et al. 
\cite{KearnsSS94}, has played a central role in learning theory. For halfspaces, PAC learning is efficient in the realizable case, but agnostic learning is NP-hard in high dimensions [\cite{GuruswamiR09,FeldmanGKP09,FeldmanGRW12}]. In contrast, 2D halfplanes admit efficient agnostic learners with optimal sample and time complexity [\cite{MathenyP21}]. Intersections of halfspaces have also been studied, with hardness results showing that efficient agnostic learning is unlikely in general [\cite{GiannopoulosKWW12,daniely2014average}]. On the algorithmic side, % Dobkin and Gunopulos 
\cite{DobkinG95} and % Fischer and Kwek 
\cite{FischerK96} designed early learners for polygons, while %Berman et al. 
\cite{BermanMR22} studied convex sets under the uniform distribution, proving near-optimal sample bounds but with high running time.
Works on related problems in agnostic learning and computational geometry are too numerous to list here. 
We mention a couple. %Kwek and Pitt~
\cite{KwekP96} presented a PAC learning algorithm with \textit{membership queries} for a class of intersections of halfspaces in $d$-dimensional space.
\cite{EppsteinORW92} showed how to find, given a set $P$ of $n$ points with weights, a maximum-weight $k$-gon with vertices in $P$ in time $O(kn^3)$.

\fi

\ifnum\iclr=0
Matheny, Singh, et al.~\cite{MathenySZWP16} describe a framework for obtaining two samples, $S$ and $N$, one for evaluating the error and the other for constructing an $\eps$-net for the set of symmetric differences of pairs of concepts from the original class $\cC$. They use it to obtain scalable algorithms for finding anomalous regions in spatial data. Our learners follow the approach of obtaining two samples. %$S$ and $N$.

The results of Haussler~\cite{Haussler92} and Talagrand~\cite{Talagrand94} 
on uniform convergence together with theory developed by Kearns, Schapire, and Sellie~\cite{KearnsSS94}, demonstrate equivalence  between agnostic PAC learning and empirical risk minimization.

The running time of agnostic learners for halfspaces and the related problem of empirical risk minimization on halfspaces was studied by \cite{DobkinEM96,MathenyP18,BermanMR22,MathenyP21}. 
Specifically, Matheny and Phillips~\cite{MathenyP21} demonstrated how to approximate maximum bichromatic discrepancy for halfspaces in $d$-dimensions in time $\tilde{O}(s + 1/\eps^d)$, which implies agnostic learners for this class that run in time $\tilde{O}(1/\eps^d)$. Notice that for two dimensions, this running time is tight up to polylogarithmic factors, since it is the same as the sample complexity. In contrast, for $k$-gons and convex sets, we currently have a gap between the sample complexity and the running time of agnostic learners.

The works on related problems in agnostic learning and computational geometry are too numerous to list here. 
We mention a couple. Kwek and Pitt~\cite{KwekP96} presented a PAC learning algorithm with \textit{membership queries} for a class of intersections of halfspaces in $d$-dimensional space.
Eppstein et al.~\cite{EppsteinORW92} showed how to find, given a set $P$ of $n$ points with weights, a maximum-weight $k$-gon with vertices in $P$ in time $O(kn^3)$.

Property testing of geometric figures in the plane has been investigated in \cite{Raskhodnikova03,KleinerKNB11,BermanMR19rsa,BermanMR19,BermanMRR24}.

Our work leaves intriguing open questions about the relationship between the sample complexity and running time for agnostic learners of $k$-gons and convexity. Such questions have been investigated for the class of halfspaces. Guruswami and Raghavendra~\cite{GuruswamiR09} show that it is NP-hard to (weakly) agnostically learn halfspaces even when examples are drawn from $\zo^d.$ Feldman et al.~\cite{FeldmanGKP09} established a similar result for the case when the examples come from $\R^d.$
\fi
% \vspace{.2cm}

\ifnum\iclr=1
\subsection{Preliminaries}
Due to space limitations, we defer the preliminaries to Appendix~\ref{appendix:prelims}.
\else
\section{Preliminaries}
% \section{Preliminaries}\label{sec:prelims}

\subsection{Agnostic Learning}\label{sec:agnostic-learning}
The {\em agnostic learning} framework of Haussler~\cite{Haussler92} and Kearns, Schapire, and Sellie~\cite{KearnsSS94} models learning from noisy data. In this framework, a learning algorithm $\cA$ is given examples of the form $(x,y)\in \cX\times\cY,$
where $\cX$ represents a domain (e.g.,  $\R^d$ or $[n]^d$ for some $d\in \N$), and $\cY$ is the set of labels (typically, $\zo$). The examples are drawn i.i.d.\ from some unknown distribution $\cD$ on $\cX\times \cY$.
 A {\em concept} is a function $f: \cX\to \cY$. In contrast to the PAC learning framework of Valiant~\cite{Valiant84}, where there is some underlying concept $f$ producing the labels, i.e., $y=f(x)$, in agnostic learning, the labels come from the distribution.
A {\em concept class} $\cC$ is a set of concepts. 
  
The goal of $\cA$ is to output a {\em hypothesis} $h$ in a specified concept class $\cC$. A concept $h$ is {\em consistent} with an example $(x,y)$ if $h(x)=y$; otherwise, $h$ {\em mislabels} the example. 
The {\em error} of a concept $h$ is measured with respect to the distribution $\cD$: specifically,  $err_\cD(h)=\Pr_{(x,y)\sim \cD}[h(x)\neq y]$, i.e., the probability that a random example drawn from $\cD$ is mislabeled by $h$. 
The smallest possible error, denoted $OPT$, is $\min_{f\in C} \{err_{\cD}(f)\}$. The algorithm is given two parameters: the {\em loss parameter} $\eps\in(0,1),$ specifying  how much the error of the output hypothesis is allowed to deviate from $OPT$, and the failure probability parameter $\delta\in(0,1).$ The number of examples $\cA$ draws from $\cD$ (in the worst case over $\cD$) is denoted $m(\eps,\delta)$ and is called the {\em sample complexity} of $\cA.$ 

\begin{definition}[Agnostic PAC learning]\label{def:agnostic-learning}
Let $\cC$ be a class of concepts $f:\cX\to \cY$. An algorithm $\cA$ is an {\em agnostic PAC learner} for $\cC$ with sample complexity $m(\eps,\delta)$ if, for every joint distribution $\cD$ over $\cX \times \cY$,
given {\em loss parameter} $\eps\in(0,1)$ and {\em failure probability parameter} $\delta\in(0,1),$ algorithm $\cA$ draws $m(\eps,\delta)$ examples i.i.d.\ from $\cD$ and returns a hypothesis $h$ such that
$$\Pr[err_{\cD}(h) \leq \min_{f\in C} \{err_{\cD}(f)\} + \eps]\geq 1-\delta,$$
where the probability is taken over $\cD$ and the coins of $\cA.$
\sr{The learner $\cA$ is {\em proper} if it always returns a hypothesis in $\cC$.}

\end{definition}
We measure the running time of the algorithm using the RAM model, where each basic arithmetic operation and memory access can be performed in a single step.

\subsection{Geometric Properties}

\begin{definition}[The class of halfplanes]\label{def:halfspaces}
A {\em halfplane} is an indicator function $f_{a,b}: \R^2\to \zo$ indexed by $a\in\R^2, b\in\R$, where $f_{a,b}(x)=1$ iff $a^Tx\le b$.
The set of all halfplanes in $\R^2$ is denoted $\cH^2.$ 
\end{definition}

Given a set of points $P$, let $\hull(P)$ represent the convex hull of $P$.

\begin{definition}[The class of $k$-gons]\label{def:k-gons}
Let $\cX\subseteq \R^2$.
A {\em $k$-gon} over $\cX$ is an indicator function $f_P: \cX\to \zo$ indexed by a set $P\in \R^2$ of $k$ points in general position, where $f_P(x)=1$ iff $x\in \hull(P)$.
The set of all $k$-gons over $\cX$ is denoted $\ckgons.$ 
\end{definition}

\begin{definition}[The class of convex sets]\label{def:convex-sets}
Let $\cX\subseteq \R^2$.
A {\em convex set} over $\cX$ is an indicator function $f_P: \cX\to \zo$ indexed by a (finite or infinite) set of points $P\subseteq\R^2$,
where $f_P(x)=1$ iff $x\in \hull(P)$.
The set of all convex sets over $\cX$ is denoted $\cconvex.$ 
\end{definition}

For a function $f:\R^d\to \zo$, let $f^{-1}(1)$ denote the set of points $x\in \R^d$ on which $f$ evaluates to~1, i.e.,
$f^{-1}(1)=\{x\in \R^d: f(x)=1\}$. If $f$ is an indicator function for some set $P$ then $f^{-1}(1)=P$.

\subsection{Empirical Risk and Discrepancy}\label{subsec:emr_disc}
We use standard definitions of empirical risk (also called empirical error)
and other notions from learning theory
(see, e.g., the textbook by Shalev-Shwartz and Ben-David~\cite{ShalevSchwartzB20}).
\begin{definition}\label{def:empirical-error}
The {\em empirical risk} of a concept $h:\cX\to\cY$ on a set $S\subseteq \cX\times \cY$ of examples is $err_S(h)=\frac 1 {|S|} \big|\{(x,y)\in S : h(x)\neq y\}\big|,$ i.e., the fraction of examples in $S$ mislabeled by $h.$ 
\end{definition}

Next, we define asymmetric discrepancy of a polygon w.r.t.\ a sample of points.
\begin{definition}[Asymmetric discrepancy]\label{def:asymmetric-discrepancy}
 Fix a polygon $P$ and a multiset $S\subset \R^2\times\zo$  of labeled examples $\{x_i,y_i\}_{i\in S}$. 
 The {\em asymmetric discrepancy} of $P$ (w.r.t.\ $S$) is 
 $$\disc(P)=|\{x\in P\cap S: y=1\}|-|\{x\in P\cap S: y=0\}|.
 $$
\end{definition}

It is well known (see, e.g., Lemma 2 in \cite{Fischer95}) that the asymmetric discrepancy of a polygon $P$ is related to the empirical risk of its indicator function $f_P$. For completeness, we state the relationship in the following claim.

\begin{claim}[Asymmetric discrepancy vs.\ empirical risk]\label{claim:weight-vs-error}
Fix a polygon $P$ and a multiset $S\subset \R^2\times\zo$  of labeled examples.
Let $S^+$ be the set of positive examples, i.e., $S^+=\{(x,y)\in S: y=1\}$. Then $\disc(P)=|S^+|-err_S(f_P)\cdot |S|.$
\end{claim}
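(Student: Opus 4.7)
The plan is to unfold both sides of the identity in terms of a natural partition of $S$ according to whether each example is inside or outside the polygon $P$ and whether its label is $0$ or $1$, and then verify the equality by counting.

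First, I would introduce shorthand for the four parts of the partition: let $S^+_P = \{(x,y)\in S : x\in P,\, y=1\}$ and $S^-_P = \{(x,y)\in S : x\in P,\, y=0\}$ denote the positive and negative examples lying inside $P$, and analogously $S^+_{\bar P}$ and $S^-_{\bar P}$ for those lying outside $P$. Since $f_P(x)=1$ if and only if $x\in P$, the set of examples mislabeled by $f_P$ is exactly $S^-_P \cup S^+_{\bar P}$. By Definition~\ref{def:empirical-error},
\[
err_S(f_P)\cdot |S| \;=\; |S^-_P| + |S^+_{\bar P}|.
\]

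Next I would rewrite the right-hand side. The set of positive examples satisfies $S^+ = S^+_P \sqcup S^+_{\bar P}$, so
\[
|S^+| - err_S(f_P)\cdot|S| \;=\; \bigl(|S^+_P| + |S^+_{\bar P}|\bigr) - \bigl(|S^-_P| + |S^+_{\bar P}|\bigr) \;=\; |S^+_P| - |S^-_P|.
\]
By Definition~\ref{def:asymmetric-discrepancy}, the right side is precisely $\disc(P)$, which completes the proof.

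There is no real obstacle here: the claim is a book-keeping identity once one spells out which examples are misclassified by $f_P$. The only thing to be slightly careful about is treating $S$ as a multiset (so the four subsets above are multisets and the counts add as expected), and noting that membership $x\in P$ should be interpreted consistently with Definition~\ref{def:k-gons}, i.e., as $x\in\hull(P)$.
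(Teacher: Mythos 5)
Your proof is correct and is essentially the same book-keeping argument as the paper's: both identify the mislabeled examples as the negatives inside $P$ together with the positives outside $P$, and cancel the latter against $|S^+|$. You merely work from the right-hand side toward $\disc(P)$ with the four-way partition made explicit, whereas the paper unfolds $\disc(P)$ toward the right-hand side.
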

\begin{proof} %[Proof of Claim~\ref{claim:weight-vs-error}]
    By definition of the asymmetric discrepancy,
    \begin{align*}
       \disc(P)&  %  =\sum_{x\in P \cap S} w(x) 
       =|\{x\in P \cap S: y=1\}|-|\{x\in P \cap S: y=0\}|\\
       &= |\{x\in S: y=1\}|-|\{x\in S\setminus P: y=1\}|-|\{x\in P \cap S: y=0\}|\\
       &= |\{x\in S: y=1\}|- |\{x\in S: f_P(x)\neq y\}|\\
       &=|S^+| - err_S(f_P)\cdot |S|,
    \end{align*}
where the last equality is obtained from the definition of $S^+$ and the empirical risk.
\end{proof}
Since $S^+$ and $S$ \sr{do not depend on the polygon}, 
Claim~\ref{claim:weight-vs-error} implies that maximizing 
the discrepancy over some set of polygons 
is equivalent to minimizing the empirical risk of their %corresponding 
indicator functions.

\subsection{Uniform Convergence}\label{sec:uniform_conv}

\begin{definition}[$\eps$-representative set of examples]\label{def:representative} 
A set $S\subseteq \cX\times \cY$ of examples is called $\eps$-representative for %domain $X$, 
hypothesis class $\cC$ w.r.t.\ distribution $\cD$ if for all $f\in\cC$,
$$|err_S(f)-err_\cD(f)| \leq \eps.$$
    \end{definition}

\begin{definition}[Uniform convergence]\label{def:uniform_convergence}
A hypothesis class $\cC$ has the uniform convergence property if there exists a function $m^{UC}_\cC : \{0,1\}^2 \to \mathbb{N}$ such that for every $\eps, \delta \in (0,1)$ and for every probability distribution $\cD$ over $\cX\times\cY$, if $S$ is a sample of $m \geq m^{UC}_\cC(\eps, \delta)$ examples drawn i.i.d.\ from $\cD$, then, with probability of at least $1 - \delta$, sample $S$ is $\eps$-representative for $\cC$ w.r.t.\ $\cD$. In this case, we say that $m^{UC}_\cC(\eps, \delta)$ examples are sufficient to get uniform convergence for $\cC$ with loss parameter $\eps$ and failure probability $\delta$. If the requirement is satisfied for one specific distribution $\cD$, as opposed to all $\cD$, we refer to it as the uniform convergence w.r.t.\ $\cD$.
\end{definition}

 By the Fundamental Theorem of Statistical Learning~\cite[Theorems 6.7-6.8]{ShalevSchwartzB20}, \\ $m_\cC(\eps,\delta)=O(\frac 1 {\eps^2}(\text{VC-dim}(\cC) +\ln \frac 1\delta))$ examples sampled i.i.d.\ from distribution $\cD$ are sufficient to get uniform convergence for $\cC$ and to agnostically PAC learn $\cC$. 
 By~\cite[Corollary 4.6]{ShalevSchwartzB20}, for the special case when the concept class $\cC$ is finite,    $m_\cC(\eps,\delta)=O(\frac 1 {\eps^2}( \ln|\cC|+\ln \frac 1\delta))$. 
 A hypothesis from $
 \cC$ that minimizes empirical risk is abbreviated as ERM. Any algorithm that gets $m\geq m_\cC(\frac \eps 2,\frac \delta 2)$ examples and outputs an ERM hypothesis is an agnostic PAC learner for $\cC$. Finally, an algorithm that gets that many samples and outputs a hypothesis that has empirical risk within $\frac \eps 4$ of an ERM is also an agnostic PAC learner.

\fi

\iffalse IF_NO_SPACE
\input{agnostic-learning}
\input{preliminaries}
END_IF_NO_SPACE\fi 

\section{Agnostic Learner and Approximate ERM for $k$-gons}\label{sec:learning-kgons}
In this section we prove our result on agnostic learning of $k$-gons. The following theorem, stated for all integers $k \geq 3$, improves upon previous bounds for triangles, quadrilaterals, and pentagons.
\begin{theorem}\label{thm:learning-kgons}
\ifnum\neurips=1
$\forall$
\else
For all 
\fi
$\eps,\delta\in (0,1)$ and  constant integer $k\geq 3$,
the class of $k$-gons over $\R^2$ is properly agnostically PAC learnable with $O(\frac {1} {\eps^2}\log \frac 1\delta)$ samples and in  $O(\frac{1}{\eps^{2k}}(\log \frac 1 \eps)\log \frac 1\delta+ \frac 1 {\eps^4}\log^2\frac 1 \delta)$ time. 
\end{theorem}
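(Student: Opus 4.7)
The plan is to implement the two-sample framework from Section~\ref{sec:techniques}. Draw $S$ of size $m=\Theta(\frac{1}{\eps^2}\log\frac{1}{\delta})$ i.i.d.\ from $\cD$, and let $N\subseteq S$ be a subsample of size $n=O(k/\eps)$ used to construct candidates. Form the collection $\cH_N$ of halfplanes whose boundary passes through a pair of points in $N$; its cardinality is $O(n^2)$. Define $\ckgons^N$ as the set of $k$-gons arising as intersections of $k$ halfplanes from $\cH_N$, so $|\ckgons^N|=O(n^{2k})=O(\eps^{-2k})$ for constant $k$. Build the triangle range-counting data structure of~\cite{GoswamiDN04} on $S$, separately on its positive and negative examples, so that $\disc$ on any query triangle is a single subtraction of range counts. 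For every $P\in\ckgons^N$, triangulate $P$ into $k-2$ triangles, sum the counts to obtain $\disc(P)$, and return $\argmax_{P\in\ckgons^N}\disc(P)$; by Claim~\ref{claim:weight-vs-error} this hypothesis minimizes $err_S(f_P)$ over $\ckgons^N$.

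Correctness splits into two parts. Because $\ckgons$ has VC-dimension $2k+1$, the sample $S$ is $(\eps/3)$-representative for $\ckgons$ (and hence for any subset of $\ckgons$) with probability at least $1-\delta/2$, by the Fundamental Theorem of Statistical Learning. The \emph{reference-set} step asserts that with probability at least $1-\delta/2$ over the choice of $N$ there is some $P^\star\in\ckgons^N$ with $err_\cD(f_{P^\star})\le err_\cD(f_{P_{opt}})+\eps/3$, where $P_{opt}\in\ckgons$ achieves $OPT$. This is proved by lifting the halfplane reference-set construction of Matheny and Phillips~\cite{MathenyP18}: for $n=O(k/\eps)$, with probability at least $1-\delta/2$ over $N$, every halfplane $h$ admits some $h'\in\cH_N$ with $\Pr_{\cD_\cX}[h(x)\neq h'(x)]\le\eps/(3k)$. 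Replacing each of the $k$ sides of $P_{opt}$ by such an $h'$ and union-bounding over the $k$ sides yields $P^\star\in\ckgons^N$ with $\Pr_{\cD_\cX}[f_{P^\star}(x)\neq f_{P_{opt}}(x)]\le\eps/3$, so $|err_\cD(f_{P^\star})-err_\cD(f_{P_{opt}})|\le\eps/3$. Combining both events and the fact that the algorithm's output minimizes $err_S$ over $\ckgons^N$, the returned hypothesis has true error at most $OPT+\eps$, so the algorithm is a proper agnostic learner with sample complexity $O(\frac{1}{\eps^2}\log\frac{1}{\delta})$, which is optimal by the VC lower bound.

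For the running time, preprocessing the triangle range-counting structure on $|S|=m$ points costs $O(m^2)=O(\frac{1}{\eps^4}\log^2\frac{1}{\delta})$, which is the second term. A single triangle query runs in $O(\log m)$ time, so each reference $k$-gon (with $k-2=O(1)$ triangles) is evaluated in $O(\log m)$ time; after standard confidence amplification, in which the candidate-generation phase is run $O(\log\frac{1}{\delta})$ times on independent subsamples $N$ of constant success probability, each time reusing the single preprocessed $S$ and keeping the best hypothesis, the total evaluation cost is $O(\eps^{-2k}\log m\cdot\log\frac{1}{\delta})=O(\frac{1}{\eps^{2k}}\log\frac{1}{\eps}\log\frac{1}{\delta})$, matching the first term. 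The main technical obstacle is the reference-set property: one must show that the $O(\eps^{-2k})$ candidates built from only $O(k/\eps)$ points of $N$ suffice to approximate the \emph{specific} unknown optimum $P_{opt}$. The crucial idea is to avoid building a direct $\eps$-net for $\ckgons$ (which would be much more expensive) and instead to pay only a factor of $k$ by applying the much cheaper halfplane net of~\cite{MathenyP18} side by side; this is precisely the step where $k$ being a small constant keeps the $k$-factor absorbed into $\eps$.
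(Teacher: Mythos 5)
Your proposal follows essentially the same route as the paper: the same reference family of $k$-gons obtained by intersecting halfplanes induced by a small subsample $N$, the same lifting of the Matheny--Phillips halfplane reference set \cite{MathenyP18} via a union bound over the $k$ sides of the optimal $k$-gon, the same triangle range-counting structure of \cite{GoswamiDN04} with triangulation of each candidate, and the same running-time accounting. One slip to fix: in your correctness paragraph you assert that a \emph{single} subsample $N$ of size $O(k/\eps)$ yields a good reference set with probability $1-\delta/2$. Lemma~\ref{lem:halfplane_reference_set} requires $|N|=\Theta(\frac{1}{\eps}\ln\frac{1}{\delta_0})$ to achieve failure probability $\delta_0$ for one halfplane, so with $|N|=O(\frac{k}{\eps}\log k)$ you only get a \emph{constant} success probability (this is exactly Lemma~\ref{lemma:k-gon-reference_set}); the $O(\log\frac 1\delta)$ independent redraws of $N$ that you describe only in the running-time paragraph are what actually deliver the $1-\delta/2$ guarantee, so the correctness argument must be routed through that repetition (the paper does this via the generic amplification of Algorithm~\ref{alg:amp} and Lemma~\ref{clm:conv-amplified}). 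Your variant of the amplification---reusing the single preprocessed $S$ across repetitions and selecting the winner by its discrepancy on $S$ rather than on a fresh validation sample---is sound here precisely because $S$ is $(\eps/3)$-representative for all of $\ckgons$ simultaneously, and is a mild simplification of the paper's scheme. Two smaller points: the halfplane reference property should be stated for the $k$ \emph{fixed} sides of the optimum and with disagreement measured on $S$ (as in Definition~\ref{def:eps-reference-set}), not uniformly over all halfplanes and not directly under $\cD_\cX$; the transfer to $\cD$ then comes from representativeness of $S$, exactly as in the paper's proof of Theorem~\ref{thm:erm-kgons-small-k}.
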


Let $\ckgons$ denote the class of $k$-gons over $\R^2$ (as in Definition~\ref{def:k-gons}).
Our $k$-gon learner first obtains a sample $S$ of $s$ examples from distribution $\cD$, where $s$ is large enough to get uniform convergence for the class $\ckgons$ with loss parameters $\frac \eps 3$ and constant failure probability. The class of $k$-gons has VC-dimension $2k+1$ (\cite{DobkinG95}). By standard VC-dimension arguments (\cite[Theorem 6.8]{ShalevSchwartzB20}), a sample of size $s=O(\frac 1 {\eps^2}\log \frac 1\delta)$ is sufficient when $k$ is constant and, moreover, an algorithm that finds an empirical risk minimizer on such a sample is an agnostic PAC 
\ifnum\iclr=1
learner  (see discussion in Appendix~\ref{subsec:emr_disc}). 
\else
learner  (recall the discussion in Section~\ref{subsec:emr_disc}).
\fi
We give an algorithm (Algorithm~\ref{alg:triangle_ERM})  that finds a hypothesis that approximately minimizes the risk. Its performance is summarized in Theorem~\ref{thm:erm-kgons-small-k}. Later 
\ifnum\iclr=1
(in Appendix~\ref{sec:amplification}),
\else
(in Section~\ref{sec:amplification}),
\fi we use our approximate ERM minimizer on a sample of size $s$ and then amplify the success probability of the resulting learner to complete the proof of Theorem~\ref{thm:learning-kgons}.

\begin{theorem}\label{thm:erm-kgons-small-k}
For all $\eps\in (0,1)$ and fixed $k\geq 3$, 
there is an algorithm (specifically, Algorithm~\ref{alg:triangle_ERM}) that
finds a hypothesis with empirical risk at most $OPT+\eps$ from the class $\ckgons$ (of $k$-gons over $\R^2$) on a set $S$ of examples in time 
$O(\frac{1}{\eps^{2k}} \log{|S|}+ |S|^2)$ with success probability at least $\frac 23$,
where $OPT$ is the smallest empirical risk of a hypothesis in $\ckgons$ on the set $S$.
\end{theorem}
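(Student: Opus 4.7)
The plan is to analyze Algorithm~\ref{alg:triangle_ERM} in two stages: first, identify a compact candidate family $\cR$ of reference $k$-gons that provably contains a near-ERM hypothesis; then, evaluate every candidate's empirical risk efficiently using a triangle range-counting data structure.

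For the candidate family, the idea is to subsample $N\subseteq S$ of size $|N|=O(\frac{1}{\eps}\log\frac{1}{\eps})$, form the $O(|N|^2)$ halfplanes whose boundary passes through two points of $N$, and let $\cR$ be the collection of all intersections of $k$ such halfplanes; this gives $|\cR|=O(1/\eps^{2k})$ up to polylogarithmic factors. Correctness hinges on a reference-set lemma. Starting from the halfplane reference-set construction of Matheny and Phillips~\cite{MathenyP18}, I would argue that, with constant probability, for every halfplane $h$ there is a pair-induced halfplane $\tilde h$ whose symmetric difference with $h$ carries at most an $\eps/k$ fraction of $S$. Given an empirical risk minimizer $P^\star=h_1\cap\cdots\cap h_k$ of $OPT$, I would replace each $h_i$ by such a matching $\tilde h_i$ to form $\tilde P=\tilde h_1\cap\cdots\cap\tilde h_k\in\cR$. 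Since the symmetric difference $P^\star\triangle\tilde P$ is contained in the union of the $k$ halfplane symmetric differences, a union bound over the $k$ sides yields $err_S(f_{\tilde P})\le err_S(f_{P^\star})+\eps=OPT+\eps$, so ERM over $\cR$ suffices.

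To evaluate candidates efficiently I would invoke Claim~\ref{claim:weight-vs-error} and work with asymmetric discrepancy instead of empirical risk. Each $P\in\cR$ is triangulated into $k-2$ triangles, and $\disc(P)$ decomposes additively across this triangulation. The algorithm preprocesses $S$ once using the triangle range-counting data structure of Goswami, Das, and Nandy~\cite{GoswamiDN04} in $O(|S|^2)$ time, supporting $O(\log|S|)$-time queries for the number of positive (respectively negative) points of $S$ inside a query triangle. Each candidate's discrepancy is then computed with $O(k)=O(1)$ queries, so sweeping $\cR$ and returning the candidate of maximum discrepancy takes $O(\frac{1}{\eps^{2k}}\log|S|)$ additional time, matching the claimed bound when added to the preprocessing cost.

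The main obstacle is the reference-set argument: the Matheny--Phillips guarantee must be applied to the empirical distribution on $S$ (which is the relevant distribution for ERM on $S$), and the per-side budget must be set to $\eps/k$ so that the union bound over the $k$ sides produces an overall $\eps$-approximation. A secondary point is the triangulation bookkeeping: one must avoid double-counting points of $S$ that lie exactly on shared edges of the triangulation, which can be handled under general position or by a consistent tie-breaking rule.
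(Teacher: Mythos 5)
Your proposal is correct and follows essentially the same route as the paper: the Matheny--Phillips halfplane reference set applied to the $k$ sides of the ERM $k$-gon with per-side budget $\eps/k$, a union bound over the sides to get an $\eps$-reference set of induced $k$-gons, and evaluation via triangulation plus the Goswami--Das--Nandy triangle range-counting structure. The only (cosmetic) difference is your choice $|N|=O(\frac{1}{\eps}\log\frac{1}{\eps})$, which introduces superfluous $\log^{2k}\frac{1}{\eps}$ factors into the candidate count; the paper takes $|N|=O(\frac{k\log k}{\eps})$, which suffices since the reference-set guarantee is only needed for the $k$ fixed halfplanes of the optimum rather than uniformly over all halfplanes, and yields the stated $O(\frac{1}{\eps^{2k}}\log|S|)$ bound exactly.
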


To compute the empirical risk of a $k$-gon,
\ifnum\iclr=1
 we use the triangle range--counting data structure of \cite{GoswamiDN04}. Recall that ERM is equivalent to maximizing asymmetric discrepancy (Def.~\ref{def:asymmetric-discrepancy}, Claim~\ref{claim:weight-vs-error}). The algorithm of Goswami et al. pre-processes a point set in the plane so that, given a query triangle $T$, it quickly returns the number of points in $T$. We build two such structures, one for positive and one for negative examples, allowing us to compute the discrepancy of any query triangle with two fast queries.
 
\else
 we use a data structure designed by Goswami, Das, and Nandy~\cite{GoswamiDN04} for 
counting points from a given set in a query triangle. Recall the notion of asymmetric discrepancy from Definition~\ref{def:asymmetric-discrepancy} and that ERM is equivalent to maximizing asymmetric discrepancy (see Claim~\ref{claim:weight-vs-error}). The data structure in~\cite[Theorem 2]{GoswamiDN04} allows us to efficiently preprocess a set of points in the plane and then, given a query triangle $T$, quickly compute the number of points from the set in $T$. 
We construct two such data structures: one on positive and the other on negative examples. By querying both data structures, we can quickly compute the asymmetric discrepancy of each query triangle. 
\fi

\begin{theorem}[Corollary of Theorem 2, \cite{GoswamiDN04}]\label{thm:triangle_weighting_structure}
There exists an algorithm $\dspre$ that, for any set $S \subset \mathbb{R}^2\times\zo$, builds a data structure $\ds$ of size $O(|S|^2)$ in time $O(|S|^2)$ for computing asymmetric discrepancy of triangles w.r.t.\ $S$. There also exits a query algorithm $\dsquery$ that, given the data structure $\ds$ and a (geometric) triangle $T$, returns in $O(\log |S|)$ time the asymmetric discrepancy of $T$ on $S$.
\end{theorem}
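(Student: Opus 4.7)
The plan is to derive this statement from the Goswami--Das--Nandy triangle range-counting result by running it twice, once on the positive examples and once on the negative ones, and then combining the answers. First, I would partition $S$ into $S^+=\{(x,y)\in S : y=1\}$ and $S^-=\{(x,y)\in S : y=0\}$, each viewed as an (unlabeled) planar point set via projection onto the first coordinate. By Theorem 2 of~\cite{GoswamiDN04}, each of $S^+$ and $S^-$ can be preprocessed in time $O(|S^\pm|^2)$ into a data structure of size $O(|S^\pm|^2)$ that, given a query triangle $T$, returns $|T\cap S^\pm|$ in $O(\log|S^\pm|)$ time. Since $|S^\pm|\le|S|$, the two preprocessing phases together take $O(|S|^2)$ time and produce a combined structure of size $O(|S|^2)$; this defines $\dspre$ and $\ds$.

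For $\dsquery$, on input $(\ds,T)$ I would invoke the Goswami--Das--Nandy query routine on both halves to obtain counts $c^+=|T\cap S^+|$ and $c^-=|T\cap S^-|$, and then return $c^+-c^-$. By Definition~\ref{def:asymmetric-discrepancy}, this value is exactly $\disc(T)$. Each of the two underlying queries takes $O(\log|S|)$ time, the subtraction is $O(1)$, so the total query time is $O(\log|S|)$, as required.

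There is no genuine technical obstacle here: the only subtlety is bookkeeping to make sure that splitting by label does not blow up any of the asymptotic guarantees (it does not, since $|S^+|+|S^-|=|S|$ and all bounds are monotone in the number of points) and that the quantity returned matches Definition~\ref{def:asymmetric-discrepancy} rather than, say, $|T\cap S|$. If anything counts as the main point of care, it is verifying the closure of Goswami--Das--Nandy's guarantees under the arbitrary partition of the input by labels and under the aggregate $|S|^2$ bound; both are immediate since their construction is applied independently to two point sets of size at most $|S|$.
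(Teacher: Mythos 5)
Your proposal is correct and is exactly the argument the paper intends: the text immediately preceding the theorem states that two copies of the Goswami--Das--Nandy structure are built, one on the positive and one on the negative examples, and the asymmetric discrepancy of a query triangle is obtained by querying both and subtracting. The bookkeeping you note (the partition does not inflate the $O(|S|^2)$ preprocessing or $O(\log|S|)$ query bounds, and $c^+-c^-$ matches Definition~\ref{def:asymmetric-discrepancy}) is all that is needed.
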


Our approximate ERM constructs a reference set of $k$-gons. The notion of a reference set is defined next.
\begin{definition}[An $\eps$-reference set]\label{def:eps-reference-set}
For functions $h,r :\cX\to \zo$, let $h\oplus r$ denote the XOR function, i.e., $(h\oplus r)(x)= h(x)+r(x)\bmod 2$. 
Let $\cC$ be a class of concepts from $\cX$ to $\zo$ and $\cR,H \subset \cC$.
The set $\cR$  is an {\em $\eps$-reference set} for a concept class $\cC$  w.r.t.\ a set $S$ of examples and a set $H$ if 
\ifnum\neurips=1
$\forall$
\else
for every concept 
\fi
$h\in H$, 
there exists a hypothesis $r\in \cR$ such that 
$|\{(x,y)\in S : (h\oplus r)(x)=1\}|\leq \eps |S|.$
\end{definition}
%\ifnum
In other words, for every $h\in H$, there exists some reference hypothesis $r\in \cR$ that does not differ too much from $h$ on the points in $S$. %For brevity, %It is not any shorter.
When $H=\{h\}$, i.e., $H$ is a singleton, we say ``a reference set for $h$'' instead of ``a reference set for $H$''.

%Next we draw a random sample $N$ to specify the specific 
Next we define the
set of $k$-gons that we use as a reference set. See Figure~\ref{fig:reference-triangle} for an illustration. Our approximate ERM constructs such a reference set based on a (sub)sample $N$ of its input examples.
%\endifnum

\begin{definition}[Induced halfplanes, reference $k$-gons]\label{def:induced-triangles}
A halfplane $h\in\cH^2$ is {\em induced} by a set $N\subset \R^2$ if $h$ is defined by a line that passes through two  points in $N$. Let $\inducedhp$ denote the set of all halfplanes induced by $N$.
A $k$-gon $t$ in $\ckgons$ is a {\em reference $k$-gon defined by a set $N\subset \R^2$} if
 $t$ is formed by the intersection of $k$ induced halfplanes.
Let $\refset$ be the set of all reference $k$-gons defined by~$N$.
\end{definition}

\newcommand{\figRefTriangle}{

\begin{figure}[h] 
    \begin{minipage}{0.48\textwidth}
    \centering
    \includegraphics[width=.65\linewidth]{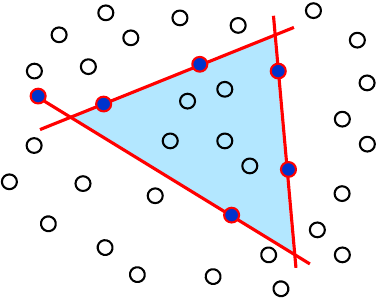}
    \caption{An illustration to Definition~\ref{def:induced-triangles}: a set of points and a (light blue) reference triangle it defines.}
    \label{fig:reference-triangle}
  \end{minipage}%
  \hspace{0.02\textwidth}
  \begin{minipage}{0.5\textwidth}
    \centering
   %\hspace{-.5cm}
    \includegraphics[width=.63\linewidth]{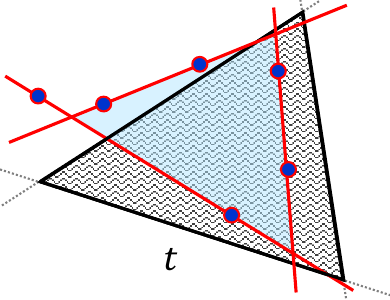}
    \caption{An illustration to Lemma~\ref{lem:eps_net_triangle}: a (wavy) triangle $t$ and a (light  blue) nearby reference triangle.}
    \label{fig:nearby-reference-triangle}
  \end{minipage}
\end{figure}
}
\figRefTriangle

Algorithm~\ref{alg:triangle_ERM} constructs a data structure $\ds$ for computing asymmetric discrepancy of triangles on sample points. 
For every $k$-gon $P\in \refset$, it computes its asymmetric discrepancy $\disc(P)$ by  triangulating $P$ and querying  $\ds$ on each triangle $T$ in the triangulation and summing up the results. 
The algorithm returns the indicator function for the $k$-gon with the largest asymmetric discrepancy. 

\begin{algorithm}[htbp]\label{alg:triangle_ERM}
{\caption{Algorithm for Approximating ERM for $k$-gons}}
\SetKwInOut{Input}{input}
\SetKwInOut{Output}{output}
\DontPrintSemicolon
\Input{loss parameter $\eps \in (0,1)$; a set $S \subseteq \R^2\times \zo$ of examples.} 
\BlankLine
\nl\label{step:construct-ds}Run algorithm $\dspre$ from Theorem~\ref{thm:triangle_weighting_structure} on the set of examples $S$ to construct a data structure $\ds$ 
for computing asymmetric discrepancy of triangles w.r.t.\ $S$.\\
\nl Sample a set $N$ of $n=\frac{c k\log k}{\epsilon}$
points from $\{x\in \R^2 : (x,y) \in S\}$ uniformly and independently with replacement,
where $c$ is a large enough constant (dictated by Lemma \ref{lemma:k-gon-reference_set}). \;
\nl Compute the reference set $\refset$ of $k$-gons (see Definition~\ref{def:induced-triangles}). \;
\nl\label{step:query_triangle}For each $k$-gon $f_P$ in $\refset$,
compute the triangulation of $P$
into $k-2$ 
triangles $T_1,\dots, T_{k-2}$.\\ 
Use algorithm $\dsquery$ from Theorem~\ref{thm:triangle_weighting_structure} to query the data structure $\ds$ on each triangle $T_i$ for $i \in [k-2]$ to get  asymmetric discrepancy $\disc(T_i)$. Compute $\disc(P)=\sum_{i \in [k-2]}\disc(T_i)$.
\CommentSty{\color{gray} \textbackslash\textbackslash Counts can be easily adjusted for boundary points.}\;
\nl Return the $k$-gon $f_{\tilde P}$, where $\tilde{P} = \displaystyle\argmax_{P \in \refset}{\disc(P)}$.
\end{algorithm}

\subsection{Analysis of the Approximate ERM for $k$-gons}
To analyze correctness of Algorithm~\ref{alg:triangle_ERM}, we  show that $\refset$ is likely to be a good reference set for $\ckgons$ w.r.t.\ the input set $S$ of examples and the ERM $k$-gon that labels them optimally. We start by proving that 
any good reference set for halfplanes yields a good reference set for $k$-gons when used in our construction.

\begin{lemma}\label{lem:eps_net_triangle}
Fix a loss parameter $\eps \in (0, 1)$, an integer $k\geq 3$, a set of examples $S$, and a $k$-gon $t\in \ckgons$.  
Let $t$ be the intersection of $k$ halfplanes $h_1,\dots,h_k\in\cH^2$.

Let $R %H
\subset \cH^2$ be an \emph{$\frac \eps {k}$}-reference set for $\cH^2$ w.r.t.\ $S$ and $\{h_1,\dots,h_k\}$.
Let $K_R$
be the set of all $k$-gons formed by an intersection of $k$ halfplanes in $R$.
%$H$.
Then $K_R$ %$T_H$ 
is an $\eps$-reference set for $\ckgons$ w.r.t.\  $S$ and the $k$-gon~$t$.
\end{lemma}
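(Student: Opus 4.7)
The plan is to show that for any $k$-gon $t = h_1 \cap \dots \cap h_k$, one can construct a nearby element of $K_R$ by replacing each bounding halfplane $h_i$ with a suitable reference halfplane $r_i \in R$, and then control the total disagreement on $S$ by a union bound over the $k$ sides.

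First I would use the hypothesis that $R$ is an $\frac{\eps}{k}$-reference set for $\cH^2$ w.r.t.\ $S$ and $\{h_1, \dots, h_k\}$. By Definition~\ref{def:eps-reference-set}, for every $i \in [k]$ there exists $r_i \in R$ with
\[
|\{(x,y) \in S : (f_{h_i} \oplus f_{r_i})(x) = 1\}| \;\le\; \tfrac{\eps}{k}\,|S|.
\]
Let $t' = r_1 \cap r_2 \cap \dots \cap r_k$ and let $f_{t'}$ be its indicator function; by construction $f_{t'} \in K_R$.

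Next I would observe that the indicator of a $k$-gon formed by intersecting halfplanes is the logical AND of the halfplane indicators, i.e.\ $f_t(x) = \bigwedge_{i=1}^{k} f_{h_i}(x)$ and similarly for $f_{t'}$. A standard observation about ANDs is that if $\bigwedge_i a_i \neq \bigwedge_i b_i$ then $a_i \neq b_i$ for at least one index $i$. Applying this pointwise gives the containment
\[
\{(x,y) \in S : (f_t \oplus f_{t'})(x) = 1\} \;\subseteq\; \bigcup_{i=1}^{k} \{(x,y) \in S : (f_{h_i} \oplus f_{r_i})(x) = 1\}.
\]
A union bound over $i \in [k]$, combined with the per-side reference guarantee, yields
\[
|\{(x,y) \in S : (f_t \oplus f_{t'})(x) = 1\}| \;\le\; \sum_{i=1}^{k} \tfrac{\eps}{k}\,|S| \;=\; \eps\,|S|,
\]
which is exactly what Definition~\ref{def:eps-reference-set} requires of $K_R$ as an $\eps$-reference set for $\ckgons$ w.r.t.\ $S$ and $t$.

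I do not expect a real obstacle here: the argument is essentially a union bound, and the only thing to be careful about is the AND-decomposition step, which relies on the specific way $k$-gons are defined as intersections of halfplanes. One minor subtlety is that the halfplanes $h_i$ come from the fixed set $\{h_1,\dots,h_k\}$ defining $t$, so the hypothesis on $R$ must be stated (and is stated) with respect to this particular set $H = \{h_1, \dots, h_k\}$, which is exactly why the lemma asks for a reference set with respect to $H$ rather than all of $\cH^2$. This will matter when we later invoke the result, because we will only need to approximate the $k$ halfplanes bounding the optimal $k$-gon rather than all halfplanes.
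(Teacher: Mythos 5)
Your proof is correct and follows essentially the same route as the paper's: replace each $h_i$ by its reference halfplane $r_i \in R$, form $t'$ as their intersection, and bound the symmetric difference of $t$ and $t'$ on $S$ by the union of the $k$ per-halfplane symmetric differences. Your explicit AND-decomposition step is just a more detailed justification of the paper's one-line observation that the two $k$-gons can differ only where some corresponding pair of halfplanes differs.
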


\begin{proof}
Since $R$ %$H$
is an $\frac \eps k$-reference set for $\cH^2$ w.r.t.\ $S$ and $\{h_1,\dots,h_k\}$, there exist halfplanes $h'_1, \dots, h'_k \in R$
such that $|\{(x,y)\in S : (h_i\oplus h'_i)(x)=1\}|\leq \frac{\eps}{k} |S|$ for all $i \in [k]$. %\{1,2,3\}$.
Let $t'$ be the $k$-gon formed by the intersection of halfplanes $h'_1, \dots, h'_k$. Then $t'\in K_R$. (See Figure~\ref{fig:nearby-reference-triangle}.)
The $k$-gons $t$ and $t'$ can differ only on points on which at least one of the corresponding pairs of halfplanes differs:
\begin{align*}
|\{(x,y)\in S : (t\oplus t')(x)=1\}|  
    &\leq \sum_{i\in[k]}|\{(x,y)\in S : (h_i\oplus h_i')(x)=1\}|\leq k\frac\eps k|S|=\eps|S|.
\end{align*}
Thus, for the $k$-gon $t$, there is a nearby (w.r.t.\ $S$) reference $k$-gon $t' \in K_R
%T_H
$, as required.
\end{proof}

To obtain a good reference set for halfplanes,
we use following result.
\begin{lemma}[\cite{MathenyP18}]\label{lem:halfplane_reference_set}
Fix a concept $h\in \cH^2$ and a set $S$ of examples from $\R^2 \times \zo$.
If a set $N$ of size 
$\frac 4\eps \ln \frac 2{\delta_0}$ is sampled uniformly and independently with replacement from $\{x: (x,y)\in S\}$ then the induced set $\inducedhp$ is an $\eps$-reference set for $\cH^2$ w.r.t.\ $S$ and $h$ with probability at least $1-\delta_0$. 
\end{lemma}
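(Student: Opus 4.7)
The plan is a two-step probabilistic argument. Without loss of generality, rotate coordinates so that the boundary line $\ell$ of $h$ is horizontal and $h=\{y\le 0\}$; set $s=|S|$. Define the upper boundary band $B_+$ as the $\lceil \eps s/2\rceil$ points of $S$ with the smallest positive $y$-coordinate, and $B_-$ symmetrically (degenerate cases in which fewer than $\eps s/2$ points of $S$ lie on one side of $\ell$ reduce to trivial constructions and can be handled separately). The first, geometric, step is the claim that for any $p\in B_+$ and $q\in B_-$, the line $\ell_{pq}$ through them, with an appropriate orientation, induces a halfplane $h'\in\inducedhp$ satisfying $|\{(x,y)\in S:(h\oplus h')(x)=1\}|\le \eps s$. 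The second, probabilistic, step shows that with $|N|=\tfrac{4}{\eps}\ln\tfrac{2}{\delta_0}$, both $B_+$ and $B_-$ are hit by $N$ with probability at least $1-\delta_0$.

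For the probabilistic step, each point of $N$ lies in $B_+$ independently with probability at least $\eps/2$, so
\[
\Pr[N\cap B_+=\emptyset]\;\le\;(1-\eps/2)^{|N|}\;\le\;e^{-\eps|N|/2}\;=\;\delta_0^{2}/4,
\]
by the choice of $|N|$. The same bound holds for $B_-$, and a union bound yields that $N$ meets both bands with probability at least $1-\delta_0^{2}/2\ge 1-\delta_0$. Combined with the geometric claim, this exhibits a halfplane in $\inducedhp$ witnessing the $\eps$-reference property for $h$ with the required probability.

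The main obstacle is the geometric step. Even with $p\in B_+$ and $q\in B_-$, the line $\ell_{pq}$ has unconstrained slope, and the two wedges forming $h\oplus h'$ extend to infinity away from the intersection of $\ell$ and $\ell_{pq}$. One therefore cannot naively conclude that only points close to $\ell$ contribute to the symmetric difference: points of $S$ with $x$-coordinates far from the crossing could in principle lie in the unbounded extensions of the wedges. The formal argument, due to Matheny and Phillips~\cite{MathenyP18}, handles this by a careful choice of the orientation of $h'$ and a more refined analysis of which points of $S$ can lie in the wedges. The rough intuition is that the minimality of $p$ and $q$ within their bands --- together with the choice of the ``narrow'' side of $\ell_{pq}$ --- is used to trade off the unbounded reach of the wedges against the empirical density of $S$ near $\ell$, so that the total number of points of $S$ captured by $h\oplus h'$ is controlled by $|B_+|+|B_-|\le \eps s$.
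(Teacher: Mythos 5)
The paper gives no proof of this lemma---it is imported as a black box from Matheny and Phillips \cite{MathenyP18}---so there is no in-paper argument to compare against; I can only assess your reconstruction on its own terms. Your probabilistic step is correct: with $|N|=\frac{4}{\eps}\ln\frac{2}{\delta_0}$, a fixed band of $\eps s/2$ points is missed with probability at most $e^{-\eps|N|/2}=\delta_0^2/4$, and the union bound goes through. The problem is the geometric step, which you leave unproved (deferring it back to \cite{MathenyP18} is circular here) and which is in fact \emph{false} as you state it. Counterexample: let $h=\{y\le 0\}$ and let $S$ consist of a cluster $A$ of $\eps s/2$ points near $(-M,0.4)$, a cluster $B$ of $\eps s/2$ points near $(M,-0.4)$, a cluster $E$ of $s/4$ points near $(-2M,0.5)$, and the remaining points split between $(0,100)$ and $(0,-100)$. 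Then $B_+=A$ and $B_-=B$, and every straddling pair $p\in B_+$, $q\in B_-$ determines, up to perturbation, the line $y=-0.4x/M$ through the origin. One orientation of the induced halfplane makes $h\oplus h'$ the narrow double wedge between the two lines, which contains $A\cup B\cup E$ (note $E$ at $x=-2M$ lies at height $0.5$, below the wedge's upper boundary at height $0.8$), i.e.\ more than $s/4$ points; the other orientation gives the complementary wide double wedge, which contains the two far clusters, i.e.\ roughly $s/2$ points. Since the two orientations partition $S\setminus(\ell\cup\ell_{pq})$ and both parts are large, no choice of orientation achieves $\eps s$, even though both bands are hit. So the ``careful choice of orientation'' you invoke cannot rescue the straddling-pair construction.

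A correct argument is structurally different from pairing one point from each side of $\ell$. The standard route is a two-stage charging scheme: first translate $\ell$ parallel to itself until it hits a first net point $p_1$, charging the crossed examples to the $O(\eps s)$ points of $S$ nearest $\ell$ on that side; then rotate the translated line about $p_1$ until it hits a second net point $p_2$, charging the examples swept during the rotation to the $O(\eps s)$ points of $S$ that come first in the angular order around $p_1$. The second ``band'' is thus angular and depends on the (random) pivot $p_1$ rather than being a slab parallel to $\ell$---this is precisely what tames the unbounded wedges that defeat your construction, and it forces a slightly more careful probabilistic step (conditioned on the identity of $p_1$, the remaining points of $N$ are still i.i.d.\ uniform over $S$, so the angular band is hit with the same exponential bound). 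Your write-up is missing both this construction and the corresponding conditioning argument.
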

Using this lemma, we obtain  the following guarantee for the reference set $\refset$
\iffalse IF_NO_SPACE
(i.e., the reference set of $k$-gons; see Definition~\ref{def:induced-triangles})
END_IF_NO_SPACE\fi
. 
\begin{lemma}\label{lemma:k-gon-reference_set}
Fix a concept $t\in \ckgons$ and a set $S$ of examples from $\R^2 \times \zo$.
If a set $N$ of size 
%$O(\frac{1}{\eps})$ 
$c \frac k \eps \log k$, where $c$ is a sufficiently large constant, is sampled uniformly and independently with replacement from $\{x : (x,y) \in S\}$ then the set $\refset$ is an $\eps$-reference set for $\ckgons$ w.r.t.\ $S$ and $t$ with constant probability. 
\end{lemma}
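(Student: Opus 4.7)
The plan is to combine the two building blocks already in hand: Lemma~\ref{lem:halfplane_reference_set}, which gives a good $\eps$-reference set for a single halfplane, and Lemma~\ref{lem:eps_net_triangle}, which lifts such a set to $k$-gons. Because a $k$-gon is an intersection of $k$ halfplanes, a union bound is the natural glue, provided we slightly tighten the per-halfplane failure probability so that the overall failure probability remains a constant.

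Concretely, write the target $k$-gon $t$ as the intersection of halfplanes $h_1,\dots,h_k\in\cH^2$. The idea is to apply Lemma~\ref{lem:halfplane_reference_set} once for each $h_i$, with loss parameter $\eps/k$ and failure probability $\delta_0 = 1/(6k)$ (the specific constant is not important; any choice of the form $c'/k$ with $c'\leq \tfrac12$ works). For a \emph{single} halfplane $h_i$, the lemma guarantees that $\inducedhp$ is an $(\eps/k)$-reference set for $\cH^2$ w.r.t.\ $S$ and $h_i$ with probability $\geq 1-\delta_0$, provided that
\[
|N| \;\geq\; \frac{4}{\eps/k}\ln\frac{2}{\delta_0} \;=\; \frac{4k}{\eps}\ln(12k) \;=\; O\!\left(\frac{k}{\eps}\log k\right).
\]
Choosing the constant $c$ in the statement of the lemma large enough makes $|N|=ck\log k/\eps$ exceed this threshold.

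Next, I would take a union bound over $i\in[k]$: with probability at least $1-k\delta_0 = 5/6$, the set $\inducedhp$ is simultaneously an $(\eps/k)$-reference set for $\cH^2$ w.r.t.\ $S$ and every $h_i$; equivalently, it is an $(\eps/k)$-reference set for $\cH^2$ w.r.t.\ $S$ and the finite set $\{h_1,\dots,h_k\}$ in the sense of Definition~\ref{def:eps-reference-set}. Conditioned on this event, Lemma~\ref{lem:eps_net_triangle} applies with $R = \inducedhp$: the set of $k$-gons formed by intersecting $k$ halfplanes from $\inducedhp$, which is precisely $\refset$ by Definition~\ref{def:induced-triangles}, is an $\eps$-reference set for $\ckgons$ w.r.t.\ $S$ and $t$. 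Combining the two steps, $\refset$ has the desired property with probability at least $5/6$, which is constant, as required.

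There is no real obstacle here; the only care needed is parameter bookkeeping. The factor $\eps/k$ inside Lemma~\ref{lem:halfplane_reference_set} contributes a multiplicative $k$ to the sample size, and the union bound over $k$ halfplanes forces $\delta_0 = \Theta(1/k)$, which contributes an additive $\ln k$ inside the logarithm. These two effects together yield the $O\!\bigl(\frac{k}{\eps}\log k\bigr)$ bound stated in the lemma, matching the value of $n$ used by Algorithm~\ref{alg:triangle_ERM}.
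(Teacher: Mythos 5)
Your proof is correct and follows essentially the same route as the paper's: decompose $t$ into $k$ halfplanes, apply Lemma~\ref{lem:halfplane_reference_set} with loss parameter $\eps/k$ and a per-halfplane failure probability of order $1/k$, union bound over the $k$ sides, and then invoke Lemma~\ref{lem:eps_net_triangle}. The only difference is cosmetic bookkeeping of the constants in the failure probability.
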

\begin{proof}
Let $h_1, \dots, h_k \in \cH^2$ be the halfplanes such that the $k$-gon $t$ is formed by their intersection.
 
By Lemma~\ref{lem:halfplane_reference_set}, for $N$ of size $\frac {4k}{\eps}\ln \frac{2k}{\delta_0}$, the set $\inducedhp$ fails to be an $\frac \eps k$-reference set for $\cH^2$ w.r.t.\ $S$ and one specific $h_i$, where $i\in[k]$, with probability at most $\frac {\delta_0}k$. 
\ifnum\iclr=0
Taking a union bound over all $k$ failure events, 
\else
By a union bound,
\fi
we get that $\inducedhp$ fails to be an $\frac \eps k$-reference set for $\cH^2$ w.r.t.\ $S$ and $\{h_1,\dots, h_k\}$ with probability at most $\delta_0$.
By Lemma~\ref{lem:eps_net_triangle}, if no failure events occur, then $\refset$ is  an $\eps$-reference set for $\ckgons$ w.r.t.\ $S$ and $k$-gon $t$. We set $\delta_0=\frac 1 3$ to obtain the desired statement.
\end{proof}

\begin{proof}[Proof of Theorem~\ref{thm:erm-kgons-small-k}]
First, we analyze correctness of Algorithm~\ref{alg:triangle_ERM}. 
Let $t^* \in \ckgons$ be a $k$-gon which achieves the minimum error on $S$. Namely,  $err_S(t^*) \leq err_S(t)$ for all $t \in \ckgons$. By Lemma~\ref{lemma:k-gon-reference_set}, the set $\refset$ constructed by Algorithm~\ref{alg:triangle_ERM} is {\em good}, i.e., an $\eps$-reference set for $\ckgons$ w.r.t.\ $S$ and $t^*$, with constant probability.
Now suppose $\refset$ is good. We analyze the error of the hypothesis $f_{\tilde{P}}$ returned by Algorithm~\ref{alg:triangle_ERM}.
Let $OPT=err_S(t^*)$. Since $\refset$ is good, there is a reference $k$-gon $r \in \refset$ such that $|err_S(t^*) - err_S(r)| \leq \eps$ and, consequently,
\ifnum\iclr=0
\begin{align}\label{eq:good_ref_triangle}
    err_{S}(r) \leq err_{S}(t^*) + \eps = OPT + \eps. 
\end{align}
\else
$ err_{S}(r) \leq err_{S}(t^*) + \eps = OPT + \eps. $
\fi

Since Algorithm~\ref{alg:triangle_ERM} outputs a reference $k$-gon with the largest asymmetric discrepancy, and thus (by Claim~\ref{claim:weight-vs-error}) with the smallest empirical risk, we get
$err_{S}(f_{\tilde{P}})  \leq  err_{S}(r).$
Combining this 
\ifnum\iclr=0
inequality with \eqref{eq:good_ref_triangle}, we obtain
$$err_{S}(f_{\tilde{P}}) \leq err_{S}(r) \leq OPT + \eps.$$
\else
with the above implies $err_{S}(f_{\tilde{P}}) \leq err_{S}(r) \leq OPT + \eps$.
\fi
Therefore, $err_{S}(f_{\tilde{P}}) \leq OPT + \eps$ with constant probability.

\textbf{Running time:}
The most time consuming steps of Algorithm~\ref{alg:triangle_ERM} are Steps \ref{step:construct-ds} and \ref{step:query_triangle}. By Theorem~\ref{thm:triangle_weighting_structure}, the running time of the preprocessing step 
\ifnum\iclr=0
(Step \ref{step:construct-ds}) 
\fi
is $O(|S|^2)$. Each query to the data structure $\ds$ takes time $O(\log(|S|))$. In Step \ref{step:query_triangle}, the algorithm queries $\ds$ on each element of 
\ifnum\iclr=0
the reference set 
\fi
$\refset$. 
There are at most $|N|^{2k}$ reference $k$-gons in the set $\refset$. For each reference $k$-gon, the algorithm computes its triangulation in time $O(k)$ and queries $\ds$ on each of the $k$ triangles from the triangulation. Hence, Step \ref{step:query_triangle} takes $O\big((\frac{k\log k}
\eps)^{2k} \cdot k \log{|S|}\big)=O(\frac 1 {\eps^{2k}}\log{|S|})$ time for constant $k$. The total running time of Algorithm~\ref{alg:triangle_ERM} is thus $O(\frac 1 {\eps^{2k}} \cdot \log{|S|} + |S|^2)$.
\end{proof}

\section{Agnostic Learner for Convex Sets }\label{sec:convexity-learner}

\ifnum\iclr=1
We start by stating the guarantees of our learner for convex sets over $[0,1]^2$ (see Definition~\ref{def:convex-sets}). 
\else
We start this section by stating the guarantees of our agnostic learner for convex sets over $[0,1]^2$ under the uniform distribution (see Definition~\ref{def:convex-sets}). 
\fi

\begin{theorem}\label{thm:convexity-agnostic-learner}
For all $\eps,\delta\in (0,1)$, the class $\cconvex$ of convex sets over $\cX=[0,1]^2$ is properly agnostically PAC learnable with  $O(\sampleComplexity)$ samples and running time $O(\runningTime)$
 under each distribution $\cD$ over $\cX\times\zo$, where the  marginal $\cD_\cX$  is uniform over~$\cX$.
\end{theorem}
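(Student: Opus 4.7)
The plan is to follow the two--sample template described in Section~\ref{sec:techniques}, instantiated for convex sets. First I would draw a smaller sample $N$ of size $n = \tilde\Theta(\eps^{-1.5})$ i.i.d.\ from the uniform distribution on $[0,1]^2$, and define the candidate (reference) class to consist of all \emph{islands} induced by $N$, namely, all sets of the form $\hull(N\cap K)$ for a convex $K\subseteq [0,1]^2$. Equivalently, the reference hypotheses are indicator functions $f_P$ where $P$ is the convex hull of some subset of $N$. The crucial geometric fact is the following: if $K^*\in\cconvex$ is an optimal hypothesis, then by Brunel's concentration result~\cite{Brunel2017}, the uniform area of $K^*\setminus \hull(N\cap K^*)$ is $\Theta(n^{-2/3}) = \Theta(\eps)$ with high probability, so the reference island $I^* = \hull(N\cap K^*)$ satisfies $err_\cD(f_{I^*})\le OPT + O(\eps)$. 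Rescaling $n$ by a constant absorbs the hidden constant.

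Next I would draw an independent sample $S$ of size $s=\tilde\Theta(\eps^{-2.5}\log\frac{1}{\delta})$ and invoke Theorem~\ref{thm:triangle_weighting_structure} twice (once for $S^+$ and once for $S^-$) to obtain, in time $O(s^2)$, a data structure that answers asymmetric-discrepancy queries on any query triangle in $O(\log s)$ time. Triangulating an island of $k$ vertices produces $k-2$ triangles, so discrepancy on any island reduces to $O(k)$ such queries. I would then run the dynamic programming algorithm of Bautista-Santiago et al.~\cite{Bautista-SantiagoDLPUV11} on $N$, using the data structure as an oracle, to find an island $\tilde I$ maximizing asymmetric discrepancy over all islands induced by $N$. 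By Claim~\ref{claim:weight-vs-error}, this $\tilde I$ also minimizes the empirical risk $err_S(f_I)$ over induced islands, and in particular $err_S(f_{\tilde I})\le err_S(f_{I^*})$.

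The main technical obstacle is uniform convergence: the class $\cconvex$ has infinite VC-dimension, so $err_S(f_I) \approx err_\cD(f_I)$ cannot be argued directly over all islands. Here I would use Valtr's concentration bound~\cite{valtr1994probability,valtr1995probability}, which guarantees that with high probability the maximum size of a convex-position subset of a uniform sample of size $n$ is $O(n^{1/3})$. Conditioning on this event, every island induced by $N$ is a convex polygon with at most $v=O(n^{1/3})=O(\eps^{-1/2})$ vertices; hence all candidate hypotheses lie in the sub-class $\cC_v$ of convex polygons with $\le v$ vertices, which has VC-dimension $O(v)=O(\eps^{-1/2})$. The Fundamental Theorem of Statistical Learning then says $|S|=\tilde O(\eps^{-1}\cdot v\cdot \log\frac{1}{\delta})=\tilde O(\eps^{-2.5}\log\frac{1}{\delta})$ samples suffice for $\eps$-uniform convergence over $\cC_v$, which matches our choice of $s$. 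Combining uniform convergence (bounding $|err_S-err_\cD|$ by $\eps$ on $f_{\tilde I}$ and $f_{I^*}$) with the island-approximation bound yields $err_\cD(f_{\tilde I}) \le err_\cD(f_{I^*})+2\eps\le OPT+O(\eps)$, proving the learner is agnostic.

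Finally, for the running time accounting: building the data structure costs $O(s^2)=\tilde O(\eps^{-5}\log^2\frac{1}{\delta})$; the dynamic programming of~\cite{Bautista-SantiagoDLPUV11} runs in $O(n^3)$ oracle queries, each costing $O(\log s)$, for a total of $\tilde O(n^3 \log s) = \tilde O(\eps^{-4.5})$; and sampling costs are negligible. After a standard confidence-amplification step (run the constant-probability learner $O(\log\frac{1}{\delta})$ times on disjoint subsamples and select the best hypothesis against a held-out validation set of size $\tilde O(\eps^{-2}\log\frac{1}{\delta})$, as in the proof of Theorem~\ref{thm:learning-kgons}), one gets the claimed bounds $\tilde O(\sampleComplexity)$ samples and $\tilde O(\runningTime)$ time. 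I expect the only delicate step to be the uniform-convergence argument, where care is needed because the event defining $\cC_v$ depends on $N$ but the sample $S$ is independent of $N$, so conditioning on the Valtr event and then applying the standard VC bound for $\cC_v$ goes through cleanly.
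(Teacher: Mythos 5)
Your proposal follows essentially the same route as the paper: the same island-based two-sample scheme, Brunel's concentration bound to show a near-optimal island exists, Valtr's theorem to bound island complexity by $O(n^{1/3})=O(\eps^{-1/2})$ vertices, the data structure of Theorem~\ref{thm:triangle_weighting_structure} feeding the OptIslands dynamic program, and the same $O(\log\frac{1}{\delta})$-fold amplification with a held-out validation sample. The only (immaterial) differences are that you invoke the VC bound for the fixed class of convex polygons with at most $O(\eps^{-1/2})$ vertices where the paper instead counts the finite class of islands induced by $N$ (both yield $\tilde O(\eps^{-2.5})$, and your variant sidesteps the conditioning on $N$ slightly more cleanly), and two bookkeeping slips: the sample bound should read $\tilde O(\eps^{-2}\cdot v\cdot\log\frac{1}{\delta})$ rather than $\eps^{-1}\cdot v$, and the $\log\frac{1}{\delta}$ factor should enter via the amplification wrapper (as you say at the end) rather than being baked into $S$, else the $O(s^2)$ preprocessing would cost $\eps^{-5}\log^{2}\frac{1}{\delta}$ instead of the claimed $\eps^{-5}\log\frac{1}{\delta}$.
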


We first present and analyze Algorithm~\ref{alg:convexity_learner}, our learner for convex sets that has constant failure probability. In 
\ifnum\iclr=0
Section~\ref{sec:amplification},
\else
Appendix~\ref{sec:amplification},
\fi
we use standard arguments to amplify the success probability of Algorithm~\ref{alg:convexity_learner} to $1-\delta$ and complete the proof of Theorem~\ref{thm:convexity-agnostic-learner}. The guarantees of Algorithm~\ref{alg:convexity_learner} are stated next. 
\ifnum\iclr=0
In addition to the usual agnostic PAC learning guarantee, it is likely to output a hypothesis corresponding to a polygon with a small number of vertices.
\fi

\begin{theorem}\label{thm:convexity-agnostic-learner-constant-probability}
For all $\eps\in (0,1)$, 
 Algorithm~\ref{alg:convexity_learner} takes a sample of size $O(\sampleComplexityNoDelta)$ from a distribution $\cD$ over $\cX\times\zo$, 
where  the  marginal distribution $\cD_\cX$ over $\cX$ is uniform, 
 and returns, with probability at least $\frac 23,$
 a hypothesis $h$ which is an indicator function for a polygon on $O(\eps^{-0.5})$ vertices and satisfies $err_{\cD}(h) \leq\min_{f\in \cconvex} \{err_{\cD}(f)\}+\eps$. 
The running time of Algorithm~\ref{alg:convexity_learner} is $O(\runningTimeNoDelta)$.
\end{theorem}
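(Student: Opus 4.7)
I plan to follow the two-sample template of Section~\ref{sec:techniques}. Algorithm~\ref{alg:convexity_learner} will (i) draw a labeled sample $S$ of size $s=\Theta(\eps^{-5/2}\log\eps^{-1})$ from $\cD$ and a subsample $N$ of size $n=\Theta(\eps^{-3/2})$ of its $x$-coordinates (which are distributed uniformly on $\cX=[0,1]^2$); (ii) build the triangle asymmetric-discrepancy data structure of Theorem~\ref{thm:triangle_weighting_structure} on $S$; and (iii) invoke the dynamic-programming algorithm of Bautista-Santiago et al.~\cite{Bautista-SantiagoDLPUV11} on $N$ to select, among all \emph{islands} $\hull(I)$ with $I\subseteq N$, the one of largest asymmetric discrepancy w.r.t.\ $S$. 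By Claim~\ref{claim:weight-vs-error}, the chosen island is an empirical risk minimizer over the family of islands induced by $N$, and by construction it is a hypothesis in $\cconvex$.

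Correctness will rest on two structural lemmas. Let $C^*\in\cconvex$ be a risk minimizer and write $OPT=err_\cD(C^*)$. The first (\emph{approximation lemma}) asserts that, with constant probability over $N$, some island induced by $N$ has $\cD$-error at most $OPT+\eps/3$. If $|C^*|\ge\eps/6$, take $I^*=\hull(N\cap C^*)\subseteq C^*$: the concentration bound of Brunel~\cite{Brunel2017} states that when $\ell$ points are drawn uniformly from a convex body $K$, the missing area $|K\setminus\hull|$ is concentrated around $|K|\cdot\Theta(\ell^{-2/3})$. Conditioning on $|N\cap C^*|=\Omega(n|C^*|)$ (which holds w.h.p.\ by Chernoff) and plugging in $n=\Theta(\eps^{-3/2})$ yields $|C^*\setminus I^*|=O(|C^*|^{1/3}n^{-2/3})=O(\eps)$, hence $err_\cD(I^*)\le OPT+\eps/3$ under the uniform marginal. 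If instead $|C^*|<\eps/6$, the all-zero hypothesis -- realized as the empty island -- already satisfies $err_\cD(\text{empty})-OPT\le|C^*|<\eps/3$.

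The second (\emph{complexity lemma}) bounds the effective hypothesis complexity so uniform convergence can be applied despite $\cconvex$ having infinite VC-dimension. Valtr's tail bounds~\cite{valtr1994probability,valtr1995probability} on the maximum subset of a uniform sample in convex position yield, with constant probability, the bound $O(n^{1/3})=O(\eps^{-1/2})$ simultaneously for every subset of $N$ in convex position; hence every island is the indicator of a convex polygon on $k=O(\eps^{-1/2})$ vertices. The class $\ckgons$ for this $k$ has VC-dimension $O(\eps^{-1/2})$, so by standard uniform convergence (Definition~\ref{def:uniform_convergence}) a sample of size $s=\Theta(\eps^{-5/2}\log\eps^{-1})$ is $(\eps/3)$-representative for $\ckgons$ with constant probability. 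A union bound over the two constant-probability events, combined with Claim~\ref{claim:weight-vs-error}, then yields
\[
err_\cD(\hat h)\le err_S(\hat h)+\tfrac{\eps}{3}\le err_S(I)+\tfrac{\eps}{3}\le err_\cD(I)+\tfrac{2\eps}{3}\le OPT+\eps,
\]
where $I$ is the near-optimal island from the approximation lemma and $\hat h$, being an island, is a polygon on $O(\eps^{-1/2})$ vertices.

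For the running time, $\dspre$ on $S$ costs $O(|S|^2)=O(\eps^{-5}\log^2\eps^{-1})$, and the DP of~\cite{Bautista-SantiagoDLPUV11} issues $\poly(n)$ triangle queries, each answered by $\dsquery$ in $O(\log s)$ time; this is dominated by the preprocessing and yields the claimed $O(\eps^{-5}\log^2\eps^{-1})$ bound. The main obstacle I foresee is the careful application of Brunel's theorem: his concentration is stated for points sampled uniformly \emph{inside} $C^*$, whereas $N$ is drawn from all of $[0,1]^2$ and only conditioned on landing in $C^*$. A clean reduction must simultaneously control $|N\cap C^*|$ and the resulting missing area on the joint probability space, and must smoothly handle the degenerate small-$|C^*|$ regime. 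Similar care is needed to upgrade Valtr's $O(n^{1/3})$ bound from a single random convex-position set to a uniform bound over all subsets of $N$ in convex position.
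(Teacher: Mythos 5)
Your proposal is correct and follows the same skeleton as the paper's proof: islands induced by a uniform net $N$, Brunel's concentration bound to show $\hull(N\cap C^*)$ approximates the optimal convex set (handled, as in the paper, by first controlling $|N\cap C^*|$ via Chernoff and then applying Brunel conditionally — the "obstacle" you flag is resolved exactly this way, via the law of total probability and monotonicity of the missing area in $\ell$), Valtr's bound to control the number of hull vertices of every island, and the Goswami--Das--Nandy structure feeding the OptIslands dynamic program. The one step where you genuinely diverge is uniform convergence: the paper conditions on the Valtr event, observes that the island class is then \emph{finite} with $\log|\cH|=O(n^{1/3}\log n)$, and applies the finite-class bound — which requires $S$ to be drawn independently of $N$ (and indeed the paper's algorithm draws $N$ as a fresh uniform sample from $[0,1]^2$, not as a subsample of $S$). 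You instead apply VC-based uniform convergence to the fixed class of convex polygons on $O(n^{1/3})=O(\eps^{-1/2})$ vertices, a class that does not depend on $N$; this is what makes your choice of taking $N$ as a subsample of the $x$-coordinates of $S$ safe, whereas it would break the paper's finite-class argument. Your route is arguably slightly cleaner on this point (and saves a $\log$ factor in principle), at the cost of invoking the VC dimension of $k$-gons for non-constant $k$; both give the stated sample size. Everything else — the chain of inequalities, the $O(\eps^{-0.5})$-vertex guarantee on the output, and the $O(s^2+n^3\log s)$ running time — matches the paper.
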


 Algorithm~\ref{alg:convexity_learner} starts by obtaining a sample $S$ from $\cD$ and a net $N$ drawn uniformly at random from $[0,1]^2$.  The net $N$ is used to construct reference objects, called {\em islands}.
Intuitively, an island induced by a set $N$ is a subset of $N$ formed by an intersection of $N$ and some convex set. See Figure~\ref{fig:island}.
Our algorithm  outputs the indicator function of the island induced by the net $N$ that has the largest asymmetric discrepancy, and thus the smallest empirical risk, w.r.t.\  the set~$S$.

\begin{definition}[Island]\label{def:island}
An {\em island} $I$ induced by a set $N\subseteq \R^2$ is a subset $I\subseteq N$ such that $\hull(I)\cap N =I$. The polygon $\hull(I)$ defined by an island $I$ is denoted $P_I$ (recall that  $\hull(I)$ denotes the convex hull of $I$). See Figure~\ref{fig:island-with-convex-hull}.

Let $\Nislands$ be the set of all islands induced by $N$ and $\cT_N$ be the set of all triangles induced by $N$ (i.e., triangles 
with vertices in $N$).
Let $f_I$ be the indicator function for the polygon~$P_I$.
\end{definition}

\newcommand{\figIsland}{
\begin{figure}[h!] 
    \begin{minipage}{0.5\textwidth}
    \centering
    \includegraphics[width=.6\linewidth]{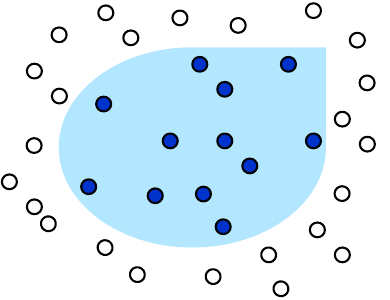}
    \caption{An illustration of an island: a set of points is intersected with a (light blue) convex set; the (blue) points in the intersection form an island.}
    \label{fig:island}
  \end{minipage}%
  \hspace{0.04\textwidth}
  \begin{minipage}{0.46\textwidth}
    \centering
    \includegraphics[width=.652\linewidth]{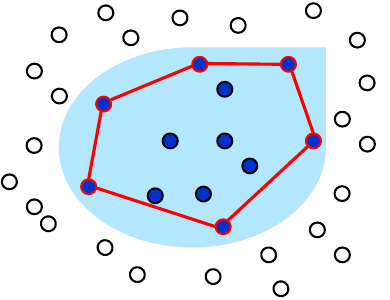}
    \caption{The convex hull of the island is delineated by (red) lines; the (six) vertices of the convex hull have the red border.}
    \label{fig:island-with-convex-hull}
  \end{minipage}
\end{figure}
}

\figIsland
\ifnum\iclr=1
Our algorithm relies on the OptIslands procedure of~\cite{Bautista-SantiagoDLPUV11}, which optimizes any monotone decomposable function $\alpha$ over polygons.
Intuitively, such functions decompose into contributions from smaller polygons.\footnote{As explained in~\cite{Bautista-SantiagoDLPUV11},  ``roughly speaking, a function $\alpha$ is decomposable if, when a
 polygon $P$ is cut into two subpolygons $P_1$ and $P_2$ along a diagonal
 $e$ joining vertices $p_1$ and $p_i$ of $P$, then $\alpha(P)$ can be calculated in constant
 time from $\alpha(P_1),\alpha(P_2)$, and some information on $e$.''}
See Appendix~\ref{append:missing-convexity} for the formal definitions and guarantees about OptIslands.
 To invoke OptIslands, 
we define  $\alpha(P)=\disc(P)$ for every convex polygon $P$, where $S$ is a sample drawn from $\cD$. 
  
Then
$\alpha$ is %indeed 
decomposable and monotone, 
and we can use Theorem~\ref{thm:triangle_weighting_structure} to build a data structure for  
computing $\alpha$ 
for all triangles induced by the set $N$ in time $O(|S|^2+|N|^3\log |S|)$. 
Therefore, OptIslands returns the island of 
$N$ maximizing asymmetric discrepancy among all islands in $\Nislands$. 

\else
Our algorithm relies on the OptIslands procedure of~\cite{Bautista-SantiagoDLPUV11}, which optimizes any monotone decomposable function $\alpha$ over polygons, defined next.\footnote{As explained in~\cite{Bautista-SantiagoDLPUV11},  ``roughly speaking, a function $\alpha$ is decomposable if, when a
 polygon $P$ is cut into two subpolygons $P_1$ and $P_2$ along a diagonal
 $e$ joining vertices $p_1$ and $p_i$ of $P$, then $\alpha(P)$ can be calculated in constant
 time from $\alpha(P_1),\alpha(P_2)$, and some information on $e$.''}
 
\begin{definition}[Definition 1 in~\cite{Bautista-SantiagoDLPUV11}]
 Let $\mathcal{P}$ be the set of all convex polygons.
 A function $\alpha : \mathcal{P} \rightarrow  \R$ is decomposable if there
 is a constant-time computable function $\beta$: $\R\times \R\times \R^2 \times \R^2 \rightarrow \R$
 such that, for any polygon $P = \hull( p_1,p_2,...,p_k) \in \mathcal{P}$ and any index
 $2 <i<k$, it holds that
 $\alpha(P) = \beta(\alpha(\hull(  p_1,...,p_i) ),\alpha(\hull( p_1,p_i,...,p_k)),p_1,p_i)$.
 The function $\alpha$ is  {\em monotone} if $\beta$ is monotone in the first and
 second argument.
\end{definition}

\begin{theorem}[Theorem 3 in~\cite{Bautista-SantiagoDLPUV11}]\label{thm:max_discr_island}
  Let $N$ be a set of $n$ points in general
 position in the plane and let $\alpha : \mathcal{P} \rightarrow \R$ be a monotone
 decomposable function. There exists an algorithm, OptIslands, that computes the island that
 minimizes (maximizes) $\alpha$  in $O(n^3 + B(n))$ time, where
 $B(n)$ is the time required to compute $\alpha$ for the $O(n^3)$ triangles induced
 by the set $N$. 
 \end{theorem}

To invoke the OptIslands algorithm from  Theorem~\ref{thm:max_discr_island}, 
we define  $\alpha(P)=\disc(P)$ for every convex polygon $P$, where $S$ is a sample drawn from $\cD$. When points in $N\cup S$ are in general position, then $\alpha(P)=\alpha(\hull(p_1,\dots,p_i))+\alpha(\hull(p_1,p_i,\dots,p_k)),$ where $P$ is the polygon with vertices $p_1,\dots,p_k$, so our function $\alpha$ is decomposable and monotone.
We use Theorem~\ref{thm:triangle_weighting_structure} to build a data structure for quickly 
computing $\alpha$ 
for all triangles induced by the set $N$. 
Thus OptIsland returns the island with  maximum discrepancy among all islands induced by $N$. Building the data structure takes $O(|S|^2)$ time, which in turn allows us to compute $\alpha$ for all triangles induced by $N$ in $O(|N|^3\log |S|)$ time.
\fi

\begin{algorithm}[htbp]\label{alg:convexity_learner}
\caption{Agnostic learner for convex sets in $[0,1]^2$ (with failure probability~$1/3$)}
\SetKwInOut{Input}{input}
\SetKwInOut{Output}{output}
\DontPrintSemicolon
\Input{loss parameter $\eps \in (0,1)$; access to examples from  a distribution $\cD$  over $[0,1]^2\times \zo$ with the uniform marginal distribution over $[0,1]^2$}
\BlankLine
\nl Let $c_1$ and $c_2$ be large enough constants.\;
\nl Sample a set $S$ of $s=\sets$ examples $(x_1,y_1),\dots, (x_s,y_s)$ i.i.d.
 from $\cD$. \;
 \nl Sample a set $N$ of $n=\setn$   points u.a.r.\ from $[0,1]^2$.
\;
\nl 
Use algorithms $\dspre$ and $\dsquery$ from Theorem~\ref{thm:triangle_weighting_structure} to build a data structure %$\ds$ 
on set $S$ and  query it on each  triangle $T\in\cT_N$ 
\ifnum\neurips=0
(i.e., induced by $N$)
\fi
to compute the asymmetric discrepancy $\disc(T)$. \\
\nl Run algorithm  $\optIslands$ from Theorem~\ref{thm:max_discr_island} on the set  $N$ with  $\alpha(T)\eqdef\disc(T)$ for all triangles $T\in\cT_N$
to find an island $\sI \subseteq N$ with maximum asymmetric discrepancy $\disc(\sI)$. 
\hfill
\CommentSty{\color{gray} \textbackslash\textbackslash Find the best island in $\Nislands$.}\;
    \nl Return the indicator function $f_{\sI}$. 
\hfill
\CommentSty{\color{gray} \textbackslash\textbackslash See Definition~\ref{def:convex-sets}. }
\end{algorithm}

The analysis of 
Algorithm~\ref{alg:convexity_learner}
is organized into three sections. In Section~\ref{sec:opt-island}, we show that with sufficient probability there is an island that approximates the best convex set well.  In Section~\ref{sec:sample-is-representative-for-islands}, we prove that the empirical risk is accurate for all islands with sufficient probability. Finally, in Section~\ref{sec:completing-convexity-proof}, we put everything together and complete the proof of
Theorem~\ref{thm:convexity-agnostic-learner-constant-probability}.

\subsection{A Nearly Optimal Island}\label{sec:opt-island}

In this section, we show that for a sample set $N$ of size $O(\frac{1}{\eps^{1.5}})$, with high constant probability, there is an island that approximates the optimal function in $\cconvex$ for distribution $\cD$. 

\begin{definition}[Optimal convex set $K$ and random set $K_n$]
Let $K$ be the (geometric) convex set with the smallest error, i.e., $K =\argmin_{C: f_C\in \cconvex} \{err_{\cD}(f_C)\}$. For each $n\in \N$,
let $K_n$ denote the convex hull of the set of points in a sample $N$ of size $n$ that fall inside $K$.
\end{definition}

Since $K$ is a convex set, $K\cap N$
is an island in $\cI_N$. 
Algorithm~\ref{alg:convexity_learner} considers all islands in $\cI_N$, including $K\cap N$. 
We show that the polygon $K_n$ (defined by the island $K\cap N$) has error similar to that of $K$ w.r.t.\ to the uniform distribution over examples.
\begin{lemma}\label{lem:err-of-K}
If  $n\geq \setn$, then 
$|err_{\cD}(f_K)-err_{\cD}(f_{K_n})|\leq \frac \eps 3$
with probability at least $0.9$.
\end{lemma}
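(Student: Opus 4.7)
Since the marginal $\cD_\cX$ is uniform on $\cX=[0,1]^2$ and $K_n=\hull(K\cap N)\subseteq K$, the indicators $f_K$ and $f_{K_n}$ differ only on points in $K\setminus K_n$. Consequently,
\[
|err_\cD(f_K)-err_\cD(f_{K_n})|\ \leq\ \Pr_{(x,y)\sim\cD}[x\in K\setminus K_n]\ =\ \operatorname{area}(K\setminus K_n).
\]
So the whole task reduces to showing that $\operatorname{area}(K\setminus K_n)\leq \eps/3$ with probability at least $0.9$.

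I would split into two cases according to the size of $K$. If $\operatorname{area}(K)\leq \eps/3$, the bound is immediate since $K\setminus K_n\subseteq K$. Otherwise, let $\ell=|N\cap K|$. Since $N$ is i.i.d.\ uniform on $[0,1]^2$, $\ell$ is binomial with mean $n\cdot\operatorname{area}(K)\geq n\eps/3=\Theta(\eps^{-1/2})$; a multiplicative Chernoff bound ensures $\ell\geq n\,\operatorname{area}(K)/2$ with probability at least, say, $0.95$. Conditional on $\ell$, the points $N\cap K$ are i.i.d.\ uniform in $K$, so $K_n$ is precisely the convex hull of $\ell$ uniform samples from $K$.

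For this setting I would invoke the concentration result of Brunel~\cite{Brunel2017} cited in Section~\ref{sec:techniques}: there exist absolute constants such that if $K_n$ is the convex hull of $\ell$ i.i.d.\ uniform points in a convex body $K\subseteq[0,1]^2$, then with probability at least $0.95$,
\[
\operatorname{area}(K\setminus K_n)\ \leq\ C\cdot \ell^{-2/3}\cdot \operatorname{area}(K),
\]
for an absolute constant $C$ (no dependence on the shape of $K$, which is crucial). Plugging in the lower bound on $\ell$ and using $\operatorname{area}(K)\leq 1$,
\[
\operatorname{area}(K\setminus K_n)\ \leq\ C\,\bigl(n\,\operatorname{area}(K)/2\bigr)^{-2/3}\operatorname{area}(K)\ =\ C'\,n^{-2/3}\operatorname{area}(K)^{1/3}\ \leq\ C'\,n^{-2/3}.
\]
With $n=c_2/\eps^{1.5}$ this gives $O(\eps/c_2^{2/3})$, which is at most $\eps/3$ for $c_2$ sufficiently large. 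Taking a union bound over the Chernoff event and the Brunel event gives overall success probability at least $0.9$.

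The main obstacle I anticipate is making sure the invoked concentration bound on the missing area is genuinely uniform over all convex bodies $K\subseteq[0,1]^2$, as the constant $C$ must not depend on $K$; once that is in hand (as it is in the affine-invariant form of Brunel's result), the remainder is a clean Chernoff-plus-case-analysis computation, and the only role of the choice $n=\Theta(\eps^{-3/2})$ is to balance the $n^{-2/3}$ decay of the missing area against the target error $\eps$.
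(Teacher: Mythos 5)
Your proposal is correct and follows essentially the same route as the paper: reduce to bounding the measure of $K\setminus K_n$, case-split on the size of $K$, apply a Chernoff bound to $|N\cap K|$, and invoke Brunel's uniform (over convex bodies) concentration bound for the missing area of the convex hull of $\ell$ uniform points, with the choice $n=\Theta(\eps^{-3/2})$ balancing the $n^{-2/3}$ decay against $\eps$. The only step you gloss over is passing from ``$\ell\geq \ell_*$ with high probability'' to applying Brunel's bound at the worst case $\ell=\ell_*$, which the paper handles explicitly via the law of total probability and the monotonicity of the missing-area tail in $\ell$; this is a minor bookkeeping point, not a gap in the approach.
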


The absolute difference in the error of $f_K$ and $f_{K_n}$ is bounded above by the ``missing area'' between $K$ and $K_n$. The measure of the missing area is with respect to the uniform distribution.
As the number of samples $n$ tends to infinity, the random variable 
$\mu(K\setminus K_n)$ goes to 
zero, as quantified in the following claim.

\begin{claim}\label{claim:missing area}
For a set $A\subseteq[0,1]^2$, let  $\mu(A)=\Pr_{x\sim[0,1]^2}[x\in A]$.
There exists 
a constant $C$ such 
$\Pr\left[\mu(K\setminus K_n)\geq C  n^{-2/3}\right] \leq 0.1$
 for sufficiently large~$n$.
\end{claim}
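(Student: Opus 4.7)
The plan is to reduce the claim to the concentration bound of Brunel~\cite{Brunel2017} mentioned in the preceding discussion, which states that if $\ell$ points are sampled independently and uniformly from a planar convex body $K$ with convex hull $K'_\ell$, then, for a constant $C'$ depending on a desired failure probability,
\[
\Pr\!\left[\frac{\mu(K \setminus K'_\ell)}{\mu(K)} \ge C' \ell^{-2/3}\right] \le 0.05.
\]
The main obstacle is that $K_n$ is not the convex hull of a fixed-size sample drawn directly from $K$, but rather the convex hull of the random subset $N \cap K$ of a uniform sample from the ambient square $[0,1]^2$. I would bridge this gap by conditioning on $m := |N \cap K|$: given $m$, the $m$ retained points are i.i.d.\ uniform in $K$, so Brunel's inequality applies directly with $\ell = m$.

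The second step is to control how small $m$ can be relative to $n$, which in turn depends on $\mu(K)$. I would split into two cases. If $\mu(K) \le C n^{-2/3}$ for a suitable constant $C$, the deterministic bound $\mu(K \setminus K_n) \le \mu(K) \le C n^{-2/3}$ already proves the claim. Otherwise $n \mu(K)$ is large enough that a Chernoff bound guarantees $m \ge \tfrac{1}{2} n \mu(K)$ with probability at least $0.95$ for sufficiently large $n$. On that event, Brunel's inequality applied with failure probability $0.05$ gives
\[
\mu(K \setminus K_n) \le C' \mu(K) \cdot m^{-2/3} \le C' \mu(K) \left(\tfrac{n \mu(K)}{2}\right)^{-2/3} = 2^{2/3} C' \mu(K)^{1/3} n^{-2/3},
\]
which is $O(n^{-2/3})$ since $\mu(K) \le 1$.

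Combining the two cases and union-bounding over the Chernoff event and the Brunel event yields $\mu(K \setminus K_n) \le C n^{-2/3}$ except with probability at most $0.1$, after enlarging $C$ to dominate the constants from both branches. I expect the main subtlety to lie in the conditioning step: one must verify that Brunel's constant can be taken uniformly over convex $K \subseteq [0,1]^2$ (so that $C'$ does not depend on $K$), and that the Chernoff event on $m$ and the Brunel event conditional on $m$ compose cleanly into a single high-probability guarantee. Once this is set up, the remainder of the argument is routine concentration.
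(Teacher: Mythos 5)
Your proposal is correct and follows essentially the same route as the paper's proof: the same case split on whether $\mu(K)$ is $O(n^{-2/3})$, the same conditioning on the number of sample points landing in $K$, the same Chernoff lower bound on that count, and the same application of Brunel's concentration bound followed by the $\mu(K)\le 1$ simplification. The only cosmetic difference is that the paper derives your simplified form of Brunel's inequality from his Theorem~4 by choosing $a=\Theta(\ell^{1/3})$, and handles the composition of the two events via the law of total probability with a monotonicity observation rather than a direct union bound.
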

\begin{proof}
Assume  $n$ is sufficiently large. When $\mu(K)=O(n^{-2/3})$, the missing area never exceeds $Cn^{-2/3}$. Now assume $\mu(K)=\Omega(n^{-2/3})$.
     Let $L=|N\cap K|$, i.e., the random variable equal to the number of points that fall in $K$ among $n$ points selected u.a.r. from $[0,1]^2$. Observe that  $N\cap K$ can be viewed as a uniform sample of $L$ points from $K$. Let random variable $\mu_L$ be $\mu(K\setminus K_n)$, that is, the measure of the ``missing'' area between $K$ and $K_n$. 
     The value of this random variable depends on the number of points from the sample of size $n$ that fall inside $K$.
     For each $\ell\in \N$, let $\mu_\ell$ represent $\mu_L$ conditioned on $L=\ell$, i.e., on the event  that $\ell$ points from a sample of size $n$ are in $K$. Then $\mu_\ell$ has the same distribution as the measure of the region between $K$ and  the convex hull of $\ell$ points selected uniformly and independently at random from $K$.
    
    The behavior of $\mu_\ell$, the missing area between a convex set and the convex hull of a uniform sample of size $\ell$ inside the set, has been extensively studied (see, e.g., \cite{HarPeled} and the survey in~\cite{Barany07}).
    By %Brunel~
    \cite[Theorem 4]{Brunel2017}, there exist constants $\aleph_1,\aleph_2, \aleph_3$, such that for every convex body $K$   and $\ell$ points  sampled uniformly and independently from $K$, for all $a\geq 0$, we have
    $\Pr\big[\frac{\mu_{\ell}}{\mu(K)}>\aleph_1\cdot \ell^{-2/3}+\frac{a}{\ell}\big]\leq \aleph_2\cdot e^{-\aleph_3\cdot a}\;.$
    Therefore, for  $a=\aleph_1\cdot \ell^{1/3}$ and a sufficiently large $\ell$, 
    \begin{align}\label{eq:bound_mu_ell} 
        \Pr\left[\mu_{\ell}>2\aleph_1\cdot\mu(K)\cdot \ell^{-2/3}\right]\leq  \aleph_2\cdot e^{-\aleph_3\cdot \aleph_1\cdot \ell^{1/3}}\leq 0.05.
    \end{align}

Now we analyze the behavior of the missing area $\AL$ when the number of samples that fall within the convex body $K$ is a random variable.   
Random variable $L$ has binomial distribution and expectation
$\E[L]=n\cdot \mu(K)$. 
By the Chernoff bound, for sufficiently large $n$,
\begin{align}\label{eq:L-Hoefdding}
\Pr[L\leq 0.5n\cdot \mu(K)]
\leq \Pr[L\leq 0.5\cdot \E[L]]
\leq e^{-n\cdot 0.5^2/2}
=e^{-n/8}
\leq 0.05.
\end{align}
Let  $\ell_*=0.5 n\cdot  \mu(K)$ and $Q=2\aleph_1\cdot\mu(K)\cdot \ell_*^{-2/3}$. By the law of total probability, 
    \begin{align}
        \Pr\left[\AL\geq Q\right]
        &=\Pr[\AL\geq Q\mid L \geq\ell_*]\cdot\Pr[L\geq \ell_*]+  \Pr[\AL\geq Q\mid L <\ell_*]\cdot \Pr[L <\ell_*] \nonumber\\
        &\leq \Pr[\AL\geq Q\mid L \geq \ell_*]\cdot 1+ 1\cdot \Pr[L<\ell_*] \nonumber\\
        &\leq \Pr[\AL\geq Q \mid L= \ell_*]+ \Pr[L<\ell_*]\label{eq:l-star}\\
        &= \Pr[\mu_{\ell_*}\geq Q]+ \Pr[L<\ell_*]
        \leq 0.1, \label{eq:muL-exceedsQ}
    \end{align}    
    where \eqref{eq:l-star} holds because $\Pr[\AL\geq Q \mid L= \ell]$ is monotonically decreasing in $\ell$,
     the equality in \eqref{eq:muL-exceedsQ} holds by definition of $\mu_\ell$, and the inequality in \eqref{eq:muL-exceedsQ} holds by \eqref{eq:bound_mu_ell} and \eqref{eq:L-Hoefdding}. Note that we can use \eqref{eq:bound_mu_ell} because $n$ is sufficiently large and $\mu(K)=\Omega(n^{-2/3})$, implying that $\ell_*$ is sufficiently large.

     We substitute $Q$ and $\ell_*$ into  $\Pr[\AL\geq Q]$ and use \eqref{eq:muL-exceedsQ} to obtain that with probability at least 0.9,
    \ifnum\neurips=1
    $ \AL< Q
         =2\aleph_1\cdot\mu(K)\cdot \ell_*^{-2/3}
         = 2\aleph_1\cdot\mu(K)\cdot (0.5 n\cdot  \mu(K))^{-2/3}
         \leq C n^{-2/3},$
     \else
     \begin{align*}   
         \AL< Q
         =2\aleph_1\cdot\mu(K)\cdot \ell_*^{-2/3}
         = 2\aleph_1\cdot\mu(K)\cdot (0.5 n\cdot  \mu(K))^{-2/3}
         \leq C n^{-2/3},   
     \end{align*}
     \fi
     where $C=2^{5/2}\aleph_1$ and the last inequality holds because $\mu(K)\leq 1$. Recall that $\mu_L=\mu(K\setminus K_n)$, completing the proof of Claim~\ref{claim:missing area}.
\end{proof}

\begin{proof}[Proof of Lemma~\ref{lem:err-of-K}]
By Claim~\ref{claim:missing area}, with probability at least $0.9$,
\ifnum\iclr=1
$|err_{\cD}(f_K)-err_{\cD}(f_{K_n})|
\leq
\Pr_{(x,y)\in \cD}[x\in K\setminus K_n]
\leq\frac{C}{n^{2/3}}\leq \frac{\eps}{3},$
\else
\begin{align*}
|err_{\cD}(f_K)-err_{\cD}(f_{K_n})|& 
\leq
\Pr_{(x,y)\in \cD}[x\in K\setminus K_n]
\leq\frac{C}{n^{2/3}}\leq \frac{\eps}{3}, 
\end{align*}
\fi
where the last inequality holds since $n=\setn$ for sufficiently large constant $c_2$.
\end{proof}

\subsection{Empirical Risk is  Accurate for All Islands}\label{sec:sample-is-representative-for-islands}

In this section, we show that sample $S$ is likely to be representative for all islands in $\Nislands$.
\ifnum\iclr=0
We first state a result by %Valtr~
\cite{valtr1994probability,valtr1995probability} 
\else
We rely on Valtr's theorem 
(formally stated in Theorem~\ref{thm:max_conv} in Appendix~\ref{appendix:valtr})
\fi
in computational geometry that implies that, for some constant $\lambda$, with high probability, all islands have at most $\lambda n^{1/3}$ vertices in their convex hull. See Figure~\ref{fig:island-with-convex-hull} for an illustration of vertices of a convex hull of an 
island.
\ifnum\iclr=0
\newcommand{\valtrThm}{
\begin{theorem}[\cite{valtr1994probability,valtr1995probability}]\label{thm:max_conv}
 Let $N$ be a set of $n$ points chosen independently and uniformly from the unit square. For every polygon $P$, let $\nu(P)$ denote the number of vertices in $P$.
Let $V=\max_{I\in\Nislands}\nu(P_I),$ that is, the largest number of vertices in the convex hull of an island. Let $h=2^{4/3}e\approx 6.85$. Then 
$\Pr[V>\lambda\cdot n^{1/3}]<\left(\frac{h}{\lambda}\right)^{3\lambda n^{1/3}}$ for all $\lambda\geq h$.
\end{theorem}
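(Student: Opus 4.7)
The plan is to reduce the bound on $V$ to counting subsets of $N$ in convex position, and then invoke Valtr's exact formula for the probability that random points in a parallelogram are in convex position. First, observe that $V$ equals the maximum cardinality of a subset of $N$ in convex position (with probability one, since a.s.\ no three i.i.d.\ uniform points are collinear). Indeed, for any island $I$ the vertices of $P_I = \hull(I)$ form a subset of $N$ in convex position; conversely, given any $T \subseteq N$ in convex position, the set $I' = \hull(T) \cap N$ is an island, and every point of $T$ is a vertex of $P_{I'}$, so $\nu(P_{I'}) \geq |T|$.

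Set $k = \lceil \lambda n^{1/3}\rceil$. A union bound over the $\binom{n}{k}$ size-$k$ subsets of $N$ yields
\begin{equation*}
\Pr[V \geq k] \leq \binom{n}{k}\cdot p_k,
\end{equation*}
where $p_k$ is the probability that $k$ independent uniform points in $[0,1]^2$ are in convex position. The main technical ingredient is Valtr's exact formula $p_k = \binom{2k-2}{k-1}^2 / (k!)^2$ for random points in a parallelogram. Using $\binom{2k-2}{k-1} \leq 4^{k-1} \leq 4^k$ and $k! \geq (k/e)^k$ gives
\begin{equation*}
p_k \leq \frac{4^{2k}}{(k/e)^{2k}} = \left(\frac{4e}{k}\right)^{2k}.
\end{equation*}

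Combining this with the standard bound $\binom{n}{k} \leq (en/k)^k$ and noting that $h^3 = 2^4 e^3 = 16 e^3$,
\begin{equation*}
\binom{n}{k}\, p_k \leq \left(\frac{en}{k}\right)^{k}\left(\frac{4e}{k}\right)^{2k} = \left(\frac{h^3\, n}{k^3}\right)^{k}.
\end{equation*}
Since $k \geq \lambda n^{1/3}$, we have $h^3 n/k^3 \leq (h/\lambda)^3$, and since $\lambda \geq h$, $(h/\lambda) \leq 1$, so
\begin{equation*}
\binom{n}{k}\, p_k \leq \left(\frac{h}{\lambda}\right)^{3k} \leq \left(\frac{h}{\lambda}\right)^{3\lambda n^{1/3}},
\end{equation*}
which yields the desired bound on $\Pr[V > \lambda n^{1/3}]$.

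The main obstacle is establishing the exact formula for $p_k$; this is the technical heart of Valtr's result and is what fixes the constant $h = 2^{4/3}e$. The derivation exploits the affine invariance of convex position under linear maps, identifies a convex $k$-gon in the square with a pair of monotone lattice-path--like chains running between its leftmost and rightmost vertices, and integrates the joint density over the $k!$ orderings, producing the squared central binomial coefficient. The remainder of the argument above (reduction to convex-position subsets plus union bound) is standard, but getting the optimal constants in the exponent relies crucially on that closed-form expression rather than a cruder probabilistic estimate.
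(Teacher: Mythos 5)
Your proposal is correct, and it reconstructs the intended derivation: the paper does not prove this statement but imports it from Valtr's work, and the bound is exactly the corollary you derive — identify $V$ with the largest convex-position subset of $N$, union-bound over $\binom{n}{k}$ subsets with $k=\lceil\lambda n^{1/3}\rceil$, and plug in Valtr's exact formula $p_k=\binom{2k-2}{k-1}^2/(k!)^2$. The fact that your constant comes out as $(16e^3)^{1/3}=2^{4/3}e=h$ confirms the calculation matches the cited statement; the only ingredient you (reasonably) take as a black box is Valtr's formula itself, which is indeed the substance of the cited references.
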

}
\valtrThm
\fi
We then use Theorem~\ref{thm:max_conv} to show that with large enough probability, we get a sample $S$ which is $\eps$-representative for for all islands induced by $N$. (Definition~\ref{def:representative} recalls what $\eps$-representative means).

\begin{lemma}[Sample $S$ is an $\frac \eps 3$-representative for islands]\label{lem:sample-is-good-for-islands}
Let  $c_1$ be a sufficiently large constant and $N$ be a sample of $n$ points drawn uniformly and independently at random from $[0,1]^2$. A sample $S$ of size $s=\frac{c_1}{\eps^2}\cdot n^{1/3} \ln n$ drawn i.i.d.\ from $\cD$  is an $\frac \eps 3$-representative for all islands 
 in $\cI_N$ with probability at least $\frac 9 {10}$. (The probability is taken over both samples, $N$ and $S$.)
\end{lemma}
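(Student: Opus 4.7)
The plan is to decouple the two independent sources of randomness: first use a combinatorial concentration bound on $N$ to restrict attention to islands with few convex-hull vertices, and then invoke a uniform convergence argument over the independent sample $S$ for the resulting effectively-finite class.

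First, I would invoke Valtr's theorem (Theorem~\ref{thm:max_conv}) to bound the combinatorial complexity of all islands induced by $N$. Choosing a sufficiently large constant $\lambda$ and setting $V:=\lambda n^{1/3}$, the theorem yields, with probability at least $19/20$ over the choice of $N$, that every island $I\in\cI_N$ has $\nu(P_I)\leq V$; call this event~$E$.

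Second, on $E$ each island is uniquely determined by its at most $V$ convex-hull vertices drawn from $N$, so
\[
|\cI_N|\;\leq\;\sum_{k=0}^{V}\binom{n}{k}\;\leq\;n^{O(V)}\;=\;n^{O(n^{1/3})},
\]
which gives $\ln|\cI_N|=O(n^{1/3}\ln n)$. Since $S$ is drawn from $\cD$ independently of $N$, I would condition on any realization of $N$ satisfying $E$ and treat $\{f_I:I\in\cI_N\}$ as a deterministic finite hypothesis class. By the standard finite-class uniform convergence bound (Corollary~4.6 of~\cite{ShalevSchwartzB20}) with loss parameter $\eps/3$ and failure probability $1/20$, a sample of size
\[
s\;=\;O\!\left(\frac{\ln|\cI_N|+\ln 20}{(\eps/3)^2}\right)\;=\;O\!\left(\frac{n^{1/3}\ln n}{\eps^2}\right)
\]
is $\tfrac{\eps}{3}$-representative for $\{f_I:I\in\cI_N\}$ with probability at least $19/20$ over $S$. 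For a sufficiently large constant $c_1$, the prescribed size $s=\frac{c_1}{\eps^2}n^{1/3}\ln n$ uniformly exceeds this threshold for every $N$ satisfying $E$.

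A union bound over the failure of $E$ (at most $1/20$) and the conditional failure of representativeness (at most $1/20$) yields overall failure at most $1/10$, giving the desired $9/10$ success. The main subtlety---and the only real obstacle---is handling the joint randomness cleanly: since $S$ is drawn independently of $N$, conditioning on $E$ fixes a deterministic finite hypothesis class \emph{before} $S$ is sampled, which is precisely what licenses the finite-class uniform convergence bound; without this decoupling, $\{f_I:I\in\cI_N\}$ would be a random collection and the standard bound would not directly apply.
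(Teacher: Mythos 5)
Your proposal is correct and follows essentially the same route as the paper's proof: apply Valtr's theorem to bound the number of hull vertices of every island by $O(n^{1/3})$ with probability $19/20$, count the resulting finite class of island polygons to get $\ln|\cI_N|=O(n^{1/3}\ln n)$, apply the finite-class uniform convergence bound to $S$, and union-bound the two failure events. Your explicit remark about conditioning on a realization of $N$ before sampling $S$ is a slightly more careful articulation of the decoupling that the paper leaves implicit, but it is not a different argument.
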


 \begin{proof}
By Theorem~\ref{thm:max_conv}, for sufficiently large  $n$, in a uniform sample $N$ of size $n$, with probability at least $\frac{19}{20}$, the maximum 
number of vertices in the convex hull of any island $I$ in $\Nislands$ is at most $7n^{1/3}$.
Condition on this event.
Let $\cH$ be the concept class of all polygons corresponding to convex hulls of islands in $\Nislands$. By the conditioning, all these polygons have at most $7n^{1/3}$ vertices. 
Thus, the size of this concept class is $|\cH|=
\sum_{k=1}^{7n^{1/3}}\binom{n}{k}\leq 7n^{1/3}\cdot \binom{n}{7n^{1/3}}\leq 7n^{1/3}(\frac{n e}{7n^{1/3}})^{7n^{1/3}}$. By the uniform convergence bound for finite classes 
\ifnum\iclr=1
(Appendix~\ref{sec:uniform_conv}),
\else
(recall Section~\ref{sec:uniform_conv}),
\fi 
a set $S$ of size $m_\cC(\frac \eps 3,\frac 1 {20})
=\frac{c_1\log |\cH|}{\eps^2}
=\frac{c_1\cdot n^{1/3}\ln n}{\eps^2}$ (for a sufficiently large constant $c_1$) is $\frac{\eps}{3}$-representative for $\cH$ with probability at least $\frac{19}{20}$.

The lemma follows from taking a union bound over the two failure events: having a convex hull of an island with more than $7n^{1/3}$ vertices and failing to have a representative set.
\end{proof}

\ifnum\iclr=0
\subsection{The Proof of Theorem~\ref{thm:convexity-agnostic-learner-constant-probability}}\label{sec:convexity-constant-probability}\label{sec:completing-convexity-proof}

\newcommand{\proofConvLearner}{
\begin{proof}[Proof of Theorem~\ref{thm:convexity-agnostic-learner-constant-probability}]
First, observe that $s$ points drawn from the uniform distribution over $[0,1]^2$ are in general position with probability 1. Thus, the invocation of  $\optIslands$ in Algorithm~\ref{alg:convexity_learner} works correctly with probability~1.
Next we analyze the failure probability and the loss of Algorithm~\ref{alg:convexity_learner}.
Algorithm~\ref{alg:convexity_learner} returns hypothesis $f_{\sI}$, where $\sI$ is the island with maximum asymmetric discrepancy  and, hence, by Claim~\ref{claim:weight-vs-error}, with smallest empirical risk.
Recall that $K$ denotes  the convex set such that $err_\cD(f_K)=OPT$, where $\displaystyle OPT=\min_{f\in\cconvex} err_\cD(f)$. Also recall that $K_n$ is 
the convex hull of the points from sample $N$ of size $n$ that fall inside $K$.

Consider the following three
failure events: that
the number of vertices in the  largest island in $\Nislands$ exceeds $7n^{1/3}$,
that sample $S$ is not  $\frac \eps 3$-representative, 
and that $|err_{\cD}(f_K)-err_{\cD}(f_{K_n})|> \frac \eps 3$.
 
By 
Theorem~\ref{thm:max_conv}, the first event occurs with probability at most $0.1$.
By Lemmas~\ref{lem:sample-is-good-for-islands} and \ref{lem:err-of-K}, each of  the latter two events occur with probability at most $0.1$ for our setting of $s$ and $n$. (This is because $s=\frac{c_1}{\eps^2}\cdot n^{1/3}\ln n$ and $n=\setn$ for sufficiently large constants $c_1$ and $c_2$, and consequently $s=\sets$). Thus, by a union bound, one or more of these events happens with probability at most $0.3$.
If none of the failure events happened then
\begin{align*}
   err_\cD(f_{\sI})&\leq err_S(f_{\sI})+\frac \eps 3 &&\text{$s$ is $\frac{\eps}{3}$-representative for $s=\frac{c_1}{\eps^2}\cdot n^{1/3}\cdot \ln n$ (Lemma~\ref{lem:sample-is-good-for-islands})}
   \\&
   \leq err_S(f_{K_n}) +\frac \eps 3 &&\text{$\sI$ is the island that minimizes the empirical error}
   \\&
   \leq err_{\cD}(f_{K_n}) +\frac {2\eps} 3 &&\text{by Lemma~\ref{lem:sample-is-good-for-islands}}
   \\&
   \leq err_\cD(f_K) +\eps && \text{by Lemma~\ref{lem:err-of-K} for $n=\setn$}
   \\
   &= OPT +\eps.
\end{align*}
Thus, we get that with probability at least $\frac 2{3}$, 

the hypothesis $f_{\sI}$ returned by Algorithm~\ref{alg:convexity_learner} satisfies $err_\cD(f_{\sI})\leq OPT+\eps$, and that $f_{\sI}$ is an indicator function for a convex polygon with at most $7n^{1/3}=O(\frac{1}{\eps^{0.5}})$ vertices.

It remains to analyze the complexity.
The sample complexity is $O(n+s)=O(s)=O(\sets)$.
By Theorems~\ref{thm:triangle_weighting_structure} and~\ref{thm:max_discr_island}, 
the running time is $O(s^2+n^3\log s)=O(\frac{1}{\eps^5}\ln^{2}\frac{1}{\eps})$.
\end{proof}
}
\proofConvLearner

\fi

\section{Amplification of Success Probability}\label{sec:amplification}
In this section, we use standard arguments to amplify the success probability of the two learners to $1-\delta$, for any given $\delta\in (0,\frac 12)$. 
%In this section 
We prove Lemma~\ref{clm:conv-amplified} as well as  Theorems~\ref{thm:learning-kgons} and~\ref{thm:convexity-agnostic-learner}.
Given a learner $\learner$ with constant success probability,  Algorithm~\ref{alg:amp} calls it repeatedly and evaluates the hypotheses obtained from the calls on a fresh sample.

\begin{algorithm}\label{alg:amp}
\caption{Amplified-Success-Learner}
\SetKwInOut{Input}{input}
\SetKwInOut{Output}{output}
\DontPrintSemicolon
\Input{A loss parameter $\eps \in (0,1)$, a failure probability parameter $\delta\in(0,\frac{1}{2})$, a learner $\learner$.}
\BlankLine
\nl Invoke
%Algorithm~\ref{alg:convexity_learner}{} 
the learner $\learner$
independently $t=\ln\frac{2}{\delta}$ times with loss parameter $\frac{\eps}{3}$. For $j\in[t]$, let $h_j$ denote the %indicator function 
hypothesis returned by the $j^{\textrm{th}}$ invocation. \\
\nl Sample a set $Q$ of $q=\frac{9}{\eps^2}\ln\frac{1}{\delta}$ 
examples i.i.d.\ 
 from $\cD$.\;
 \nl For each $j\in [t]$, compute $err_{Q}(h_j)$.\;
 \nl Return hypothesis $\hat{h}=\argmin_{j\in[t]}\{err_{Q}(h_j)\}$. 
\end{algorithm}

% \talya{The proof of the following lemma as well as the proofs of Theorems~\ref{thm:learning-kgons} and~\ref{thm:convexity-agnostic-learner} that are based on it are deferred to Appendix~\ref{appendix:amplification}.}

\begin{lemma}\label{clm:conv-amplified}
Let $\cC$ be a class of concepts $f:\cX\to \cY$ and $\cD$ be a distribution on labeled examples. Let $\learner$ be an agnostic PAC learner for $\cC$ w.r.t.\ distribution $\cD$ that has failure probability $\frac 1 3$ and sample complexity $S_\learner$. %, and runs in time $T_\learner$. 
Then  Algorithm~\ref{alg:amp} is an agnostic PAC learner for $\cC$ w.r.t.\ distribution $\cD$ that takes failure probability $\delta\in(0,\frac 1 2)$ as input and has sample complexity
$O(S_\learner\cdot \ln\frac 1\delta+ \frac{1}{\eps^2}\ln\frac{1}{\delta})$.

\end{lemma}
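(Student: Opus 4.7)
\textbf{Proof plan for Lemma~\ref{clm:conv-amplified}.}
The plan is to show two high-probability events and then combine them. First, I would argue that at least one of the $t = \ln(2/\delta)$ independent invocations of $\learner$ (each with loss parameter $\eps/3$ and failure probability $1/3$) succeeds. Since the invocations are independent, the probability that all of them fail is at most $(1/3)^t \leq (1/3)^{\ln(2/\delta)} \leq \delta/2$. Conditioning on this event, fix an index $j^\ast \in [t]$ such that $err_\cD(h_{j^\ast}) \leq OPT + \eps/3$, where $OPT = \min_{f \in \cC} err_\cD(f)$.

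Second, I would show that empirical risk on the fresh sample $Q$ accurately estimates the true risk simultaneously for all $t$ hypotheses $h_1,\dots,h_t$. Since $Q$ is drawn independently after the hypotheses are fixed, Hoeffding's inequality applies to each $h_j$ as a deterministic concept: for every $j \in [t]$,
\[
\Pr\bigl[\,|err_Q(h_j) - err_\cD(h_j)| > \eps/3\,\bigr] \;\leq\; 2\exp\!\bigl(-2 q (\eps/3)^2\bigr).
\]
A union bound over the $t$ hypotheses, together with $q = \tfrac{9}{\eps^2}\ln\tfrac{1}{\delta}$ (absorbing the factor $\ln t = \ln\ln(2/\delta)$ into the constant, or by slightly enlarging $q$ by a constant factor), gives that with probability at least $1 - \delta/2$ the estimate $err_Q(h_j)$ is within $\eps/3$ of $err_\cD(h_j)$ for every $j \in [t]$.

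Third, I combine the two events via another union bound: with probability at least $1-\delta$, both hold simultaneously. Under this conjunction, the hypothesis $\hat h = \argmin_{j\in[t]} err_Q(h_j)$ output by Algorithm~\ref{alg:amp} satisfies
\[
err_Q(\hat h) \;\leq\; err_Q(h_{j^\ast}) \;\leq\; err_\cD(h_{j^\ast}) + \tfrac{\eps}{3} \;\leq\; OPT + \tfrac{2\eps}{3},
\]
and then $err_\cD(\hat h) \leq err_Q(\hat h) + \eps/3 \leq OPT + \eps$, establishing the agnostic PAC guarantee with failure probability $\delta$.

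Finally, I would account for the sample complexity: the $t$ invocations of $\learner$ consume $t \cdot S_\learner(\eps/3, 1/3) = O\!\bigl(S_\learner \ln \tfrac{1}{\delta}\bigr)$ examples (noting that $S_\learner$ was defined for loss parameter $\eps/3$, which inflates by a constant factor absorbed into $O(\cdot)$), and the verification set $Q$ contributes $q = O(\tfrac{1}{\eps^2} \ln \tfrac{1}{\delta})$ examples, giving the claimed $O(S_\learner \ln\tfrac{1}{\delta} + \tfrac{1}{\eps^2}\ln\tfrac{1}{\delta})$ bound. I do not foresee any serious obstacle; the main care needed is in balancing constants so that the two failure events each contribute at most $\delta/2$ while keeping $q$ at the stated size, which is handled by either folding $\ln\ln(2/\delta)$ into the $O(\cdot)$ or by a mildly larger multiplicative constant in~$q$.
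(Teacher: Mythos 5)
Your proposal is correct and follows essentially the same route as the paper's proof: bound the probability that all $t$ independent invocations fail by $(1/3)^t \le \delta/2$, use Hoeffding plus a union bound over the $t$ (fixed, since $Q$ is fresh) hypotheses to make $Q$ representative with probability $1-\delta/2$, and chain the three inequalities to get $err_\cD(\hat h) \le OPT + \eps$. If anything, you are slightly more careful than the paper about absorbing the $\ln t$ factor from the union bound into the constant in $q$.
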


\begin{proof}
%[Proof of Lemma~\ref{clm:conv-amplified}]
Fix some $h$ in $\{h_j\}_{j\in [t]}$.
For $i\in [q]$, let $\chi_i$ be the indicator for the event that $h(x_i)\neq y_i$.
Let $\chi=\sum_{i\in[q]}\chi_i$. Note that $\chi=q\cdot err_{Q}(h)$ and
$\E[\chi]=q\cdot err_{\cD}(h)$. By the Hoeffding bound, 
\begin{align*}
\Pr\left[|err_Q(h)-err_{\cD}(h)|\geq \frac{\eps}{3}\right]
%&=\Pr\left[|q\cdot err_Q(h)-q\cdot err_{\cD}(h)|\geq q\cdot \frac{\eps}{4}\right]\\
%& =\Pr\left[|\chi-\E[\chi]|> q\cdot \frac{\eps}{4}\right]
\leq 2\exp\left(-2\Big(\frac \eps 3\Big)^2\cdot q\right)=\frac 2{\delta^2}.
\end{align*}
We call sample $Q$ {\em representative} if $|err_Q(h)-err_{\cD}(h)|\leq \frac{\eps}{3}$ for all $h\in \{h_j\}_{j\in[t]}$. 
By a union bound over the $t=\ln\frac{2}{\delta}$ invocations and for $\delta\in(0,\frac{1}{2})$, the probability of  $Q$ being representative is at least $1-\frac \delta 2$.

Let $h^*$ be $\argmin_{j\in [t]} \{err_{\cD}(h_j)\}$, i.e.,  the best hypothesis w.r.t.\ to $\cD$ among $h_1,\dots,h_t$.
Since algorithm $\learner$ has failure probability $\frac 23$, we get
$\Pr[err_{\cD}(h_j)>OPT+\frac{\eps}{3}]\leq \frac 13$ for all $j\in[t]$.
We call $h^*$ {\em good} if  $err_{\cD}(h^*)\leq OPT+\frac{\eps}{3}$.
Since $h^*$ is the function that minimizes $err_{\cD}(h_j)$ among the $t$ functions,  $h^*$ is not good with probability at most $(\frac{1}{3})^{t}<\frac{\delta}{2}$.
Therefore, with probability at least $1-\delta$, sample $Q$ is representative and $h^*$ is good. We get
\begin{align*}
    err_{\cD}(\hat{h})&\leq err_Q(\hat{h})+\frac{\eps}{3} &&\text{$Q$ is a representative sample}
    \\& \leq err_Q(h^*)+\frac{\eps}{3}  &&\text{$\hat{h}$ minimizes empirical risk w.r.t.\ to $Q$}
    \\& \leq err_{\cD}(h^*)+\frac{\eps}{3}+\frac{\eps}{3} 
    &&\text{$Q$ is a representative sample}
    \\&\leq OPT+\eps\;, &&\text{$h^*$ is good}
\end{align*}
as stated.

The sample complexity is due to the sampling of the set $Q$ and 
 the $O(\ln\frac{1}{\delta})$ invocations of the learner, yielding the overall sample complexity of 
$O(\frac{1}{\eps^2}\ln\frac{1}{\delta}+\ln\frac{1}{\delta}S_{\learner})$.
\end{proof}

We are now ready to prove Theorem~\ref{thm:learning-kgons}.

\begin{proof}[Proof of Theorem~\ref{thm:learning-kgons}]
Let $\cA$ be an algorithm that  first takes a sample $S$ of size $\frac{c}{\eps^2}$ for some sufficiently large constant $c$, so that with probability at least $\frac 56$, the sample $S$ is $\eps$-representative for $\cC_k$, and then invokes Algorithm~\ref{alg:triangle_ERM} with the set $S$ and loss parameter $\eps$. Then \talya{by the discussion prior to Theorem~\ref{thm:erm-kgons-small-k}, }
Algorithm $\cA$ is an $(\eps,\frac 23)$-agnostic PAC learner for $\cC_k$
with sample complexity $S_{\cA}=O(\frac{1}{\eps^2})$ and running time $T_{\cA}=O(\frac{1}{\eps^{2k}}\ln\frac{1}{\eps}+\frac{1}{\eps^4})=O(\frac{1}{\eps^{2k}}\ln\frac{1}{\eps})$.

\sloppy
Therefore, by Lemma~\ref{clm:conv-amplified}, invoking Algorithm~\ref{alg:amp} with $\cA$ results in an $(\eps,\delta)$-agnostic PAC learner for $\cC_k$ with sample sample complexity
% The time complexity is due to the $t$ invocations of Algorithm~\ref{alg:convexity_learner}
% which accounts for the first term, 
% and computing $err_Q(h_j)$ for the $t$ hypothesis which takes $O(q\cdot t \cdot v_{\cA})$ time
% and accounts for the second term.
% Therefore, the overall running time is $O(t\cdot T_{\learner}+\frac{1}{\eps^2}\ln^2\frac{1}{\delta}\cdot v_{\cA})$.
$O(S_{\learner}\cdot\ln\frac{1}{\delta}+\frac{1}{\eps^2}\ln\frac{1}{\delta})
=O(\frac{1}{\eps^2}\ln\frac{1}{\delta})$.
For the running time analysis observe that 
for every  constant $k$, hypothesis $h\in \cC_k$, and $x\in \R^2$, computing $h(x)$ takes $O(1)$ time.
 Therefore, the running time is
$O(T_{\learner}\cdot \ln\frac{1}{\delta}+\frac{1}{\eps^2}\ln^2\frac{1}{\delta})=O(\frac{1}{\eps^{2k}}\ln\frac{1}{\eps}\ln\frac{1}{\delta}+\frac{1}{\eps^2}\ln^2\frac{1}{\delta}).$
\end{proof} 

In the proof of Theorem~\ref{thm:convexity-agnostic-learner}, we use the following algorithm in order to efficiently evaluate the error of the $t$ hypotheses on the sample $Q$.

\begin{theorem}[Theorem 1 in~\cite{BroadalG00}]\label{thm:polygon-points}
There exists an algorithm that, given a convex polygon $P$ in the plane with $v$ vertices and a set $Q$ of $q$ points on the plane, returns the set of points $P\bigcap Q$ in time  
$O((v+q)\log v)$.
\end{theorem}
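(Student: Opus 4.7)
The plan is the classical two-phase approach: preprocess $P$ in $O(v)$ time, then answer $q$ point-in-convex-polygon queries in $O(\log v)$ time each. Let $p_0,p_1,\ldots,p_{v-1}$ be the vertices of $P$ in counterclockwise order (if not already given in this form, they can be put in this order in $O(v)$ time since $P$ is convex and its vertex list is cyclic). The $v-2$ diagonals from $p_0$ implicitly fan-triangulate $P$ into triangles $T_i=\triangle p_0 p_i p_{i+1}$ for $1\le i\le v-2$. No explicit data structure need be built beyond fixing the pivot $p_0$ and the cyclic vertex ordering, so preprocessing is $O(v)$.

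For each query point $x\in Q$, first perform a constant-time wedge test at $p_0$: if $x$ does not lie in the angular cone bounded by the rays $p_0\to p_1$ and $p_0\to p_{v-1}$, report $x\notin P$. Otherwise, binary search on $i\in\{1,\ldots,v-1\}$ using the sign of the cross product of $(p_i-p_0)$ with $(x-p_0)$; by convexity this sign is monotone in $i$ along the sorted fan, so the binary search locates in $O(\log v)$ orientation tests the unique index $j$ such that $x$ lies in the wedge between the rays $p_0\to p_j$ and $p_0\to p_{j+1}$. A final orientation test against the edge $p_j p_{j+1}$ decides whether $x$ lies on the interior side (i.e.\ the same side as $p_0$), which is equivalent to $x\in T_j\subseteq P$. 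Summing, preprocessing contributes $O(v)$ and the $q$ queries contribute $O(q\log v)$, giving $O(v+q\log v)=O((v+q)\log v)$ overall, matching the claimed bound.

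The main obstacle is really only careful handling of degenerate configurations: query points lying exactly on a diagonal $p_0 p_i$, on an edge of $P$, coincident with a vertex, or on the boundary of the initial wedge. These can be resolved with a consistent tie-breaking convention in the orientation tests (e.g.\ deciding once and for all whether a zero cross product is classified as interior or exterior, matching the paper's convention for boundary points in $P\cap Q$), and do not affect the $O((v+q)\log v)$ running time. An alternative matching the same bound would be to sort $Q$ along a sweep direction and perform a plane sweep against $P$, but the per-point binary-search scheme above is both simpler and already tight.
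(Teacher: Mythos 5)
The paper gives no proof of this statement: it is imported verbatim as Theorem~1 of the cited reference and used as a black box, so there is nothing internal to compare against. Your argument is a correct, self-contained derivation of the claimed bound by the standard method (fan triangulation from a fixed vertex, constant-time cone test, binary search over the angularly sorted fan rays, one final orientation test against the located edge), and the monotonicity of the orientation sign along the fan that justifies the binary search is exactly right for a convex polygon. In fact you establish the slightly stronger bound $O(v+q\log v)$, which of course implies $O((v+q)\log v)$; the only caveat worth recording is that the $O(v)$ preprocessing presumes the vertices are supplied in cyclic boundary order (as is standard for a polygon) — recovering that order from an unordered vertex set would cost $O(v\log v)$, which still fits the stated bound. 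Your remark on consistent tie-breaking for boundary points matches the paper's parenthetical note in Algorithm~\ref{alg:triangle_ERM} that boundary counts can be adjusted.
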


\begin{proof}[Proof of Theorem~\ref{thm:convexity-agnostic-learner}]
Let $\cA$ be the algorithm that runs  Algorithm~\ref{alg:convexity_learner} and outputs the hypothesis $f_{\sI}$ returned by Algorithm~\ref{alg:convexity_learner} if $\hull(\sI)$ has $O(\eps^{-0.5})$ vertices and fails otherwise.
Note that $\cA$ has the same guarantees as those of Algorithm~\ref{alg:convexity_learner}, with the additional guarantee that it always outputs an indicator function for a bounded size polygon or fails.
Hence, 
it has
sample complexity 
$S_{\cA}=O(\sampleComplexityNoDelta)$ and running time $T_{\cA}=O(\runningTimeNoDelta)$. 

Now we analyze the complexity of Algorithm~\ref{alg:amp} invoked with $\cA$.

In order to efficiently compute the error empirical error of the $t$ hypothesis on the set $Q$ we do as follows. For each hypothesis that is an indicator function for some polygon $P$ over $\nu(P)=c\eps^{-0.5}$ vertices and for the $Q$ examples, we invoke the algorithm described in Theorem~\ref{thm:polygon-points}  to compute $P\bigcap Q$. This takes time $O((q+\nu(P))\cdot \log|\nu(P)|)=O(\frac{1}{\eps^2}\ln\frac{1}{\eps}\ln\frac{1}{\delta})$ for each hypothesis. 
Therefore, computing the empirical error of all $t=O(\ln\frac{1}{\delta})$ hypotheses takes time $O(\frac{1}{\eps^2}\ln\frac{1}{\eps}\ln^2\frac{1}{\delta})$.

\talya{
Thus, 
the sample complexity of the $(\eps,\delta)$-learner for $\cconvex$ is $O(S_{\cA}\cdot \frac{1}{\delta}+\frac{1}{\eps^2}\ln\frac{1}{\delta})=O(\frac{1}{\eps^{2.5}}\ln\frac{1}{\eps}\ln\frac{1}{\delta}+\frac{1}{\eps^2}\ln\frac{1}{\delta})=O(\sampleComplexity)$, and the running time is $O(t\cdot T_{\cA}+\frac{1}{\eps^2}\ln\frac{1}{\eps}\ln\frac{1}{\delta})=O(\runningTime)$.

\sloppy
Finally, since by Theorem~\ref{thm:convexity-agnostic-learner-constant-probability}, Algorithm~\ref{alg:convexity_learner} is an $(\eps,\frac 23)$-agnostic 
PAC learner for the class $\cconvex$ under every distribution $\cD$ over $\cX \times \{0,1\}$, such that $\cD_{\cX}$ is uniform, so is algorithm $\cA$, and therefore, Algorithm~\ref{alg:amp} invoked with $\cA$ is an $(\eps,\delta)$ learner for $\cconvex$ under the same distribution.}
\end{proof}

\ifnum\neurips=0

In Appendix~\ref{sec:completing-convexity-proof},
we complete the proof of Theorem~\ref{thm:convexity-agnostic-learner-constant-probability} about agnostic learning of convex sets for the special case of success probability 
$\frac 23$. 

\fi

% \section*{Acknowledgments}

\printbibliography

@inproceedings{daniely2014average,
  title={From average case complexity to improper learning complexity},
  author={Daniely, Amit and Linial, Nati and Shalev-Shwartz, Shai},
  booktitle={Proceedings of the forty-sixth annual ACM symposium on Theory of computing},
  pages={441--448},
  year={2014}
}

@inproceedings{matheny2018practical,
  title={Practical Low-Dimensional Halfspace Range Space Sampling},
  author={Matheny, Michael and Phillips, Jeff M},
  booktitle={26th Annual European Symposium on Algorithms (ESA 2018)},
  year={2018},
  organization={Schloss Dagstuhl-Leibniz-Zentrum fuer Informatik}
}

@article{GGR98,
  author    = {Oded Goldreich and
               Shafi Goldwasser and
               Dana Ron},
  title     = {Property Testing and its Connection to Learning and Approximation},
  journal   = {J. {ACM}},
  volume    = {45},
  number    = {4},
  pages     = {653--750},
  year      = {1998},
  url       = {https://doi.org/10.1145/285055.285060},
  doi       = {10.1145/285055.285060},
  timestamp = {Tue, 06 Nov 2018 12:51:44 +0100},
  biburl    = {https://dblp.org/rec/journals/jacm/GoldreichGR98.bib},
  bibsource = {dblp computer science bibliography, https://dblp.org}
}

@article{Bautista-SantiagoDLPUV11,
  author    = {Crevel Bautista{-}Santiago and
               Jos{\'{e}} Miguel D{\'{\i}}az{-}B{\'{a}}{\~{n}}ez and
               Dolores Lara and
               Pablo P{\'{e}}rez{-}Lantero and
               Jorge Urrutia and
               Inmaculada Ventura},
  title     = {Computing optimal islands},
  journal   = {Oper. Res. Lett.},
  volume    = {39},
  number    = {4},
  pages     = {246--251},
  year      = {2011},
  url       = {https://doi.org/10.1016/j.orl.2011.04.008},
  doi       = {10.1016/j.orl.2011.04.008},
  timestamp = {Sat, 16 Sep 2017 12:04:33 +0200},
  biburl    = {https://dblp.org/rec/journals/orl/Bautista-SantiagoDLPUV11.bib},
  bibsource = {dblp computer science bibliography, https://dblp.org}
}

@article{EppsteinORW92,
  author    = {David Eppstein and
               Mark H. Overmars and
               G{\"{u}}nter Rote and
               Gerhard J. Woeginger},
  title     = {Finding Minimum Area k-gons},
  journal   = {Discret. Comput. Geom.},
  volume    = {7},
  pages     = {45--58},
  year      = {1992}
}

@inproceedings{MathenySZWP16,
  author    = {Michael Matheny and
               Raghvendra Singh and
               Liang Zhang and
               Kaiqiang Wang and
               Jeff M. Phillips},
  title     = {Scalable spatial scan statistics through sampling},
  booktitle = {{SIGSPATIAL/GIS}},
  pages     = {20:1--20:10},
  publisher = {{ACM}},
  year      = {2016}
}

@inproceedings{MathenyP21,
  author    = {Michael Matheny and
               Jeff M. Phillips},
  title     = {Approximate Maximum Halfspace Discrepancy},
  booktitle = {{ISAAC}},
  series    = {LIPIcs},
  volume    = {212},
  pages     = {4:1--4:15},
  publisher = {Schloss Dagstuhl - Leibniz-Zentrum f{\"{u}}r Informatik},
  year      = {2021}
}

@article{GiannopoulosKWW12,
  author    = {Panos Giannopoulos and
               Christian Knauer and
               Magnus Wahlstr{\"{o}}m and
               Daniel Werner},
  title     = {Hardness of discrepancy computation and {\(\epsilon\)}-net verification
               in high dimension},
  journal   = {J. Complex.},
  volume    = {28},
  number    = {2},
  pages     = {162--176},
  year      = {2012}
}

@article{GoswamiDN04,
  author    = {Partha P. Goswami and
               Sandip Das and
               Subhas C. Nandy},
  title     = {Triangular range counting query in 2D and its application in finding
               k nearest neighbors of a line segment},
  journal   = {Comput. Geom.},
  volume    = {29},
  number    = {3},
  pages     = {163--175},
  year      = {2004}
}

@inproceedings{Raskhodnikova03,
  author       = {Sofya Raskhodnikova},
  editor       = {Sanjeev Arora and
                  Klaus Jansen and
                  Jos{\'{e}} D. P. Rolim and
                  Amit Sahai},
  title        = {Approximate Testing of Visual Properties},
  booktitle    = {Approximation, Randomization, and Combinatorial Optimization: Algorithms
                  and Techniques, 6th International Workshop on Approximation Algorithms
                  for Combinatorial Optimization Problems, {APPROX} 2003 and 7th International
                  Workshop on Randomization and Approximation Techniques in Computer
                  Science, {RANDOM} 2003, Princeton, NJ, USA, August 24-26, 2003, Proceedings},
  series       = {Lecture Notes in Computer Science},
  volume       = {2764},
  pages        = {370--381},
  publisher    = {Springer},
  year         = {2003},
  url          = {https://doi.org/10.1007/978-3-540-45198-3\_31},
  doi          = {10.1007/978-3-540-45198-3\_31},
  timestamp    = {Tue, 14 May 2019 10:00:48 +0200},
  biburl       = {https://dblp.org/rec/conf/random/Raskhodnikova03.bib},
  bibsource    = {dblp computer science bibliography, https://dblp.org}
}

@article{BermanMR19,
  author       = {Piotr Berman and
                  Meiram Murzabulatov and
                  Sofya Raskhodnikova},
  title        = {The Power and Limitations of Uniform Samples in Testing Properties
                  of Figures},
  journal      = {Algorithmica},
  volume       = {81},
  number       = {3},
  pages        = {1247--1266},
  year         = {2019},
  url          = {https://doi.org/10.1007/s00453-018-0467-9},
  doi          = {10.1007/S00453-018-0467-9},
  timestamp    = {Fri, 12 Apr 2019 09:24:09 +0200},
  biburl       = {https://dblp.org/rec/journals/algorithmica/BermanMR19.bib},
  bibsource    = {dblp computer science bibliography, https://dblp.org}
}

@article{HarPeled,
  author       = {Sariel Har{-}Peled},
  title        = {On the Expected Complexity of Random Convex Hulls},
  journal      = {CoRR},
  volume       = {abs/1111.5340},
  year         = {2011},
  url          = {http://arxiv.org/abs/1111.5340},
  eprinttype    = {arXiv},
  eprint       = {1111.5340},
  timestamp    = {Mon, 13 Aug 2018 16:48:45 +0200},
  biburl       = {https://dblp.org/rec/journals/corr/abs-1111-5340.bib},
  bibsource    = {dblp computer science bibliography, https://dblp.org}
}

@article{KomlosPW92,
  author       = {J{\'{a}}nos Koml{\'{o}}s and
                  J{\'{a}}nos Pach and
                  Gerhard J. Woeginger},
  title        = {Almost Tight Bounds for epsilon-Nets},
  journal      = {Discret. Comput. Geom.},
  volume       = {7},
  pages        = {163--173},
  year         = {1992},
  url          = {https://doi.org/10.1007/BF02187833},
  doi          = {10.1007/BF02187833},
  timestamp    = {Thu, 12 Mar 2020 17:21:27 +0100},
  biburl       = {https://dblp.org/rec/journals/dcg/KomlosPW92.bib},
  bibsource    = {dblp computer science bibliography, https://dblp.org}
}

@article{Valiant84,
  author       = {Leslie G. Valiant},
  title        = {A Theory of the Learnable},
  journal      = {Commun. {ACM}},
  volume       = {27},
  number       = {11},
  pages        = {1134--1142},
  year         = {1984},
  url          = {https://doi.org/10.1145/1968.1972},
  doi          = {10.1145/1968.1972},
  timestamp    = {Wed, 14 Nov 2018 10:22:34 +0100},
  biburl       = {https://dblp.org/rec/journals/cacm/Valiant84.bib},
  bibsource    = {dblp computer science bibliography, https://dblp.org}
}

@article{KearnsSS94,
  author       = {Michael J. Kearns and
                  Robert E. Schapire and
                  Linda Sellie},
  title        = {Toward Efficient Agnostic Learning},
  journal      = {Mach. Learn.},
  volume       = {17},
  number       = {2-3},
  pages        = {115--141},
  year         = {1994},
  url          = {https://doi.org/10.1007/BF00993468},
  doi          = {10.1007/BF00993468},
  timestamp    = {Mon, 02 Mar 2020 16:29:16 +0100},
  biburl       = {https://dblp.org/rec/journals/ml/KearnsSS94.bib},
  bibsource    = {dblp computer science bibliography, https://dblp.org}
}

@article{PachT13,
  author       = {J{\'{a}}nos Pach and
                  G{\'{a}}bor Tardos},
  editor       = {Ferran Hurtado and
                  Marc J. van Kreveld},
  title        = {Tight lower bounds for the size of epsilon-nets},
  journal    = {Journal of the AMS},
  volume       = {26},
  pages        = {645–658},
  year         = {2013}
}

@book{ShalevSchwartzB20,
  author       = {Shai Shalev{-}Shwartz and
                  Shai Ben{-}David},
  title        = {Understanding Machine Learning - From Theory to Algorithms},
  publisher    = {Cambridge University Press},
  year         = {2014},
  url          = {http://www.cambridge.org/de/academic/subjects/computer-science/pattern-recognition-and-machine-learning/understanding-machine-learning-theory-algorithms},
  isbn         = {978-1-10-705713-5},
  timestamp    = {Fri, 05 Jun 2020 14:37:01 +0200},
  biburl       = {https://dblp.org/rec/books/daglib/0033642.bib},
  bibsource    = {dblp computer science bibliography, https://dblp.org}
}

@article{ParnasRR06,
  author       = {Michal Parnas and
                  Dana Ron and
                  Ronitt Rubinfeld},
  title        = {Tolerant property testing and distance approximation},
  journal      = {J. Comput. Syst. Sci.},
  volume       = {72},
  number       = {6},
  pages        = {1012--1042},
  year         = {2006},
  url          = {https://doi.org/10.1016/j.jcss.2006.03.002},
  doi          = {10.1016/J.JCSS.2006.03.002},
  timestamp    = {Tue, 16 Feb 2021 14:04:24 +0100},
  biburl       = {https://dblp.org/rec/journals/jcss/ParnasRR06.bib},
  bibsource    = {dblp computer science bibliography, https://dblp.org}
}

@article{PallavoorRW22,
  author       = {Ramesh Krishnan S. Pallavoor and
                  Sofya Raskhodnikova and
                  Erik Waingarten},
  title        = {Approximating the distance to monotonicity of {Boolean} functions},
  journal      = {Random Struct. Algorithms},
  volume       = {60},
  number       = {2},
  pages        = {233--260},
  year         = {2022},
  url          = {https://doi.org/10.1002/rsa.21029},
  doi          = {10.1002/RSA.21029},
  timestamp    = {Sat, 08 Jan 2022 01:37:33 +0100},
  biburl       = {https://dblp.org/rec/journals/rsa/PallavoorRW22.bib},
  bibsource    = {dblp computer science bibliography, https://dblp.org}
}

@article{GuruswamiR09,
  author       = {Venkatesan Guruswami and
                  Prasad Raghavendra},
  title        = {Hardness of Learning Halfspaces with Noise},
  journal      = {{SIAM} J. Comput.},
  volume       = {39},
  number       = {2},
  pages        = {742--765},
  year         = {2009},
  url          = {https://doi.org/10.1137/070685798},
  doi          = {10.1137/070685798},
  timestamp    = {Sat, 27 May 2017 14:22:58 +0200},
  biburl       = {https://dblp.org/rec/journals/siamcomp/GuruswamiR09.bib},
  bibsource    = {dblp computer science bibliography, https://dblp.org}
}

@article{FeldmanGKP09,
  author       = {Vitaly Feldman and
                  Parikshit Gopalan and
                  Subhash Khot and
                  Ashok Kumar Ponnuswami},
  title        = {On Agnostic Learning of Parities, Monomials, and Halfspaces},
  journal      = {{SIAM} J. Comput.},
  volume       = {39},
  number       = {2},
  pages        = {606--645},
  year         = {2009},
  url          = {https://doi.org/10.1137/070684914},
  doi          = {10.1137/070684914},
  timestamp    = {Wed, 14 Jun 2017 20:29:47 +0200},
  biburl       = {https://dblp.org/rec/journals/siamcomp/FeldmanGKP09.bib},
  bibsource    = {dblp computer science bibliography, https://dblp.org}
}

@article{Haussler92,
  author       = {David Haussler},
  title        = {Decision Theoretic Generalizations of the {PAC} Model for Neural Net
                  and Other Learning Applications},
  journal      = {Inf. Comput.},
  volume       = {100},
  number       = {1},
  pages        = {78--150},
  year         = {1992},
  url          = {https://doi.org/10.1016/0890-5401(92)90010-D},
  doi          = {10.1016/0890-5401(92)90010-D},
  timestamp    = {Mon, 28 Aug 2023 21:32:30 +0200},
  biburl       = {https://dblp.org/rec/journals/iandc/Haussler92.bib},
  bibsource    = {dblp computer science bibliography, https://dblp.org}
}

@inproceedings{Fischer95,
  author       = {Paul Fischer},
  editor       = {Wolfgang Maass},
  title        = {More or Less Efficient Agnostic Learning of Convex Polygons},
  booktitle    = {Proceedings of the Eigth Annual Conference on Computational Learning
                  Theory, {COLT} 1995, Santa Cruz, California, USA, July 5-8, 1995},
  pages        = {337--344},
  publisher    = {{ACM}},
  year         = {1995},
  url          = {https://doi.org/10.1145/225298.225339},
  doi          = {10.1145/225298.225339},
  timestamp    = {Tue, 06 Nov 2018 16:59:06 +0100},
  biburl       = {https://dblp.org/rec/conf/colt/Fischer95.bib},
  bibsource    = {dblp computer science bibliography, https://dblp.org}
}

@article{DobkinEM96,
  author       = {David P. Dobkin and
                  David Eppstein and
                  Don P. Mitchell},
  title        = {Computing the Discrepancy with Applications to Supersampling Patterns},
  journal      = {{ACM} Trans. Graph.},
  volume       = {15},
  number       = {4},
  pages        = {354--376},
  year         = {1996},
  url          = {https://doi.org/10.1145/234535.234536},
  doi          = {10.1145/234535.234536},
  timestamp    = {Tue, 06 Nov 2018 12:51:26 +0100},
  biburl       = {https://dblp.org/rec/journals/tog/DobkinEM96.bib},
  bibsource    = {dblp computer science bibliography, https://dblp.org}
}

@inproceedings{DobkinG95,
  author       = {David P. Dobkin and
                  Dimitrios Gunopulos},
  editor       = {Wolfgang Maass},
  title        = {Concept Learning with Geometric Hypotheses},
  booktitle    = {Proceedings of the Eigth Annual Conference on Computational Learning
                  Theory, {COLT} 1995, Santa Cruz, California, USA, July 5-8, 1995},
  pages        = {329--336},
  publisher    = {{ACM}},
  year         = {1995},
  url          = {https://doi.org/10.1145/225298.225338},
  doi          = {10.1145/225298.225338},
  timestamp    = {Thu, 14 Oct 2021 09:55:00 +0200},
  biburl       = {https://dblp.org/rec/conf/colt/DobkinG95.bib},
  bibsource    = {dblp computer science bibliography, https://dblp.org}
}

@inproceedings{MathenyP18,
  author       = {Michael Matheny and
                  Jeff M. Phillips},
  editor       = {Wen{-}Lian Hsu and
                  Der{-}Tsai Lee and
                  Chung{-}Shou Liao},
  title        = {Computing Approximate Statistical Discrepancy},
  booktitle    = {29th International Symposium on Algorithms and Computation, {ISAAC}
                  2018, December 16-19, 2018, Jiaoxi, Yilan, Taiwan},
  series       = {LIPIcs},
  volume       = {123},
  pages        = {32:1--32:13},
  publisher    = {Schloss Dagstuhl - Leibniz-Zentrum f{\"{u}}r Informatik},
  year         = {2018},
  url          = {https://doi.org/10.4230/LIPIcs.ISAAC.2018.32},
  doi          = {10.4230/LIPICS.ISAAC.2018.32},
  timestamp    = {Tue, 11 Feb 2020 15:52:14 +0100},
  biburl       = {https://dblp.org/rec/conf/isaac/MathenyP18.bib},
  bibsource    = {dblp computer science bibliography, https://dblp.org}
}

@incollection{Barany07,
title="Random Polytopes, Convex Bodies, and Approximation",
author={B{\'a}r{\'a}ny, Imre},
booktitle="Stochastic Geometry: Lectures given at the C.I.M.E. Summer School held in Martina Franca, Italy, September 13--18, 2004",
year="2007",
publisher="Springer Berlin Heidelberg",
address="Berlin, Heidelberg",
pages="77--118",
isbn="978-3-540-38175-4",
doi="10.1007/978-3-540-38175-4_2",
url="https://doi.org/10.1007/978-3-540-38175-4_2"
}

@inproceedings{KwekP96,
  author       = {Stephen Kwek and
                  Leonard Pitt},
  editor       = {Avrim Blum and
                  Michael J. Kearns},
  title        = {{PAC} Learning Intersections of Halfspaces with Membership Queries
                  (Extended Abstract)},
  booktitle    = {Proceedings of the Ninth Annual Conference on Computational Learning
                  Theory, {COLT} 1996, Desenzano del Garda, Italy, June 28-July 1, 1996},
  pages        = {244--254},
  publisher    = {{ACM}},
  year         = {1996},
  url          = {https://doi.org/10.1145/238061.238109},
  doi          = {10.1145/238061.238109},
  timestamp    = {Tue, 06 Nov 2018 16:59:06 +0100},
  biburl       = {https://dblp.org/rec/conf/colt/KwekP96.bib},
  bibsource    = {dblp computer science bibliography, https://dblp.org}
}

@inproceedings{FischerK96,
  title={Minimizing disagreement for geometric regions using dynamic programming, with applications to machine learning and computer graphics},
  author={Fischer, Paul and Kwek, Stephen},
  booktitle={eCOLT TechReport eC-TR-96-004},
  year={1996},
}

@article{BermanMR22,
  author       = {Piotr Berman and
                  Meiram Murzabulatov and
                  Sofya Raskhodnikova},
  title        = {Tolerant Testers of Image Properties},
  journal      = {{ACM} Trans. Algorithms},
  volume       = {18},
  number       = {4},
  pages        = {37:1--37:39},
  year         = {2022},
  url          = {https://doi.org/10.1145/3531527},
  doi          = {10.1145/3531527},
  timestamp    = {Mon, 28 Aug 2023 21:42:40 +0200},
  biburl       = {https://dblp.org/rec/journals/talg/BermanMR22.bib},
  bibsource    = {dblp computer science bibliography, https://dblp.org}
}

@article{Talagrand94,
 ISSN = {00911798},
 URL = {http://www.jstor.org/stable/2244494},
 abstract = {Under natural conditions on a class F of functions on a probability space, near optimal bounds are given for the probabilities $P\big(\sup_{f\in\mathscr{F}}|\sum_{i\leq n} f(X_i) - nE(f)| \geq M\sqrt n\big)$. The method is a variation of this author's method to study the tail probability of the supremum of a Gaussian process.},
 author = {M. Talagrand},
 journal = {The Annals of Probability},
 number = {1},
 pages = {28--76},
 publisher = {Institute of Mathematical Statistics},
 title = {Sharper Bounds for Gaussian and Empirical Processes},
 urldate = {2024-05-22},
 volume = {22},
 year = {1994}
}

@inproceedings{BlumerEHW86,
  author       = {Anselm Blumer and
                  Andrzej Ehrenfeucht and
                  David Haussler and
                  Manfred K. Warmuth},
  editor       = {Juris Hartmanis},
  title        = {Classifying Learnable Geometric Concepts with the Vapnik-Chervonenkis
                  Dimension (Extended Abstract)},
  booktitle    = {Proceedings of the 18th Annual {ACM} Symposium on Theory of Computing,
                  May 28-30, 1986, Berkeley, California, {USA}},
  pages        = {273--282},
  publisher    = {{ACM}},
  year         = {1986},
  url          = {https://doi.org/10.1145/12130.12158},
  doi          = {10.1145/12130.12158},
  timestamp    = {Tue, 06 Nov 2018 11:07:04 +0100},
  biburl       = {https://dblp.org/rec/conf/stoc/BlumerEHW86.bib},
  bibsource    = {dblp computer science bibliography, https://dblp.org}
}

@article{valtr1995probability,
  title={Probability that random points are in convex position},
  author={Valtr, Pavel},
  journal={Discrete \& Computational Geometry},
  volume={13},
  number={3-4},
  pages={637--643},
  year={1995},
  publisher={Springer-Verlag Berlin, Heidelberg}
}

@misc{valtr1994probability,
  title={Probability that $n$ random points are in convex position},
  author={Valtr, Pavel},
  year={1994},
    journal={Algorithmische Diskrete Mathematik},
url={https://refubium.fu-berlin.de/bitstream/fub188/17874/1/1994_01.pdf},
howpublished={\url{https://refubium.fu-berlin.de/bitstream/fub188/17874/1/1994\_01.pdf}},
note         = "Accessed: 2024-16-07"
}

@article{Brunel2017,
author = {Victor-Emmanuel Brunel},
title = {{Deviation inequalities for random polytopes in arbitrary convex bodies}},
volume = {26},
journal = {Bernoulli},
number = {4},
publisher = {Bernoulli Society for Mathematical Statistics and Probability},
pages = {2488 -- 2502},
keywords = {convex body, Convex hull, covering number, density support estimation, deviation inequality, random polytope},
year = {2020},
doi = {10.3150/19-BEJ1164},
URL = {https://doi.org/10.3150/19-BEJ1164}
}

@article{FeldmanGRW12,
  title={Agnostic learning of monomials by halfspaces is hard},
  author={Feldman, Vitaly and Guruswami, Venkatesan and Raghavendra, Prasad and Wu, Yi},
  journal={SIAM Journal on Computing},
  volume={41},
  number={6},
  pages={1558--1590},
  year={2012},
  publisher={SIAM}
}

@article{kantchelian2014large,
  title={Large-margin convex polytope machine},
  author={Kantchelian, Alex and Tschantz, Michael C and Huang, Ling and Bartlett, Peter L and Joseph, Anthony D and Tygar, J Doug},
  journal={Advances in Neural Information Processing Systems},
  volume={27},
  year={2014}
}

@inproceedings{BroadalG00,
  title={Dynamic planar convex hull},
  author={Brodal, Gerth St{\o}lting and Jacob, Riko},
  booktitle={The 43rd Annual IEEE Symposium on Foundations of Computer Science, 2002. Proceedings.},
  pages={617--626},
  year={2002},
  organization={IEEE}
}

@inproceedings{FerreiraPRV26,
  author       = {Renato {Ferreira Pinto Jr.} and Diptaksho Palit and Sofya Raskhodnikova},
  title        = {Computational Complexity in Property Testing},
  booktitle    = {Proceedings of the 2026 Annual {ACM-SIAM} Symposium on Discrete Algorithms,
                  {SODA} 2026},
  publisher    = {{SIAM}},
  year         = {2026},
  note         = {To appear.},
  url           = {https://arxiv.org/abs/2510.05927}
  }

@article{BermanMRR24,
  author       = {Piotr Berman and
                  Meiram Murzabulatov and
                  Sofya Raskhodnikova and
                  Dragos{-}Florian Ristache},
  title        = {Testing Connectedness of Images},
  journal      = {Algorithmica},
  volume       = {86},
  number       = {11},
  pages        = {3496--3517},
  year         = {2024},
  url          = {https://doi.org/10.1007/s00453-024-01248-x},
  doi          = {10.1007/S00453-024-01248-X},
  timestamp    = {Thu, 24 Oct 2024 15:33:03 +0200},
  biburl       = {https://dblp.org/rec/journals/algorithmica/BermanMRR24.bib},
  bibsource    = {dblp computer science bibliography, https://dblp.org}
}

@article{BermanMR19rsa,
  author       = {Piotr Berman and
                  Meiram Murzabulatov and
                  Sofya Raskhodnikova},
  title        = {Testing convexity of figures under the uniform distribution},
  journal      = {Random Struct. Algorithms},
  volume       = {54},
  number       = {3},
  pages        = {413--443},
  year         = {2019},
  url          = {https://doi.org/10.1002/rsa.20797},
  doi          = {10.1002/RSA.20797},
  timestamp    = {Sun, 19 Jan 2025 14:20:08 +0100},
  biburl       = {https://dblp.org/rec/journals/rsa/BermanMR19.bib},
  bibsource    = {dblp computer science bibliography, https://dblp.org}
}

@article{KleinerKNB11,
  author       = {Igor Kleiner and
                  Daniel Keren and
                  Ilan Newman and
                  Oren Ben{-}Zwi},
  title        = {Applying Property Testing to an Image Partitioning Problem},
  journal      = {{IEEE} Trans. Pattern Anal. Mach. Intell.},
  volume       = {33},
  number       = {2},
  pages        = {256--265},
  year         = {2011},
  url          = {https://doi.org/10.1109/TPAMI.2010.165},
  doi          = {10.1109/TPAMI.2010.165},
  timestamp    = {Wed, 14 Nov 2018 10:51:22 +0100},
  biburl       = {https://dblp.org/rec/journals/pami/KleinerKNB11.bib},
  bibsource    = {dblp computer science bibliography, https://dblp.org}
}

\ifnum\iclr=1
\appendix

\section{Preliminaries}\label{appendix:prelims}

\ifnum\iclr=1
\section{Material Deferred from Section~\ref{sec:learning-kgons}}\label{append:missing-k-gons}

The following figures illustrate concepts from Section~\ref{sec:learning-kgons}.

\figRefTriangle
\fi

\section{Material Deferred from Section~\ref{sec:convexity-learner}}\label{append:missing-convexity}

\ifnum\iclr=1
\subsection{Missing figures}
The following figures were deferred from Section~\ref{sec:convexity-learner}.
\figIsland
\fi

% \ifnum\iclr=1
 \subsection{Decomposable functions and the OptIsland algorithm}
Recall that our algorithm uses a data structure of \cite{Bautista-SantiagoDLPUV11} which relies on the notion of decomposable functions, defined next.

\begin{definition}[Definition 1 in~\cite{Bautista-SantiagoDLPUV11}]
 Let $\mathcal{P}$ be the set of all convex polygons.
 A function $\alpha : \mathcal{P} \rightarrow  \R$ is decomposable if there
 is a constant-time computable function $\beta$: $\R\times \R\times \R^2 \times \R^2 \rightarrow \R$
 such that, for any polygon $P = \hull( p_1,p_2,...,p_k) \in \mathcal{P}$ and any index
 $2 <i<k$, it holds that
 $\alpha(P) = \beta(\alpha(\hull(  p_1,...,p_i) ),\alpha(\hull( p_1,p_i,...,p_k)),p_1,p_i)$.
 The function $\alpha$ is  {\em monotone} if $\beta$ is monotone in the first and
 second argument.
\end{definition}

\begin{theorem}[Theorem 3 in~\cite{Bautista-SantiagoDLPUV11}]\label{thm:max_discr_island}
  Let $N$ be a set of $n$ points in general
 position in the plane and let $\alpha : \mathcal{P} \rightarrow \R$ be a monotone
 decomposable function. There exists an algorithm, OptIslands, that computes the island that
 minimizes (maximizes) $\alpha$  in $O(n^3 + B(n))$ time, where
 $B(n)$ is the time required to compute $\alpha$ for the $O(n^3)$ triangles induced
 by the set $N$. 
 \end{theorem}

To invoke the OptIslands algorithm from  Theorem~\ref{thm:max_discr_island}, 
we define  $\alpha(P)=\disc(P)$ for every convex polygon $P$, where $S$ is a sample drawn from $\cD$. 

\begin{observation}
Let $P$ be a polygon with vertices $p_1,\dots,p_k$.
If points in $N\cup S$ are in general position, then $\alpha(P)=\alpha(\hull(p_1,\dots,p_i))+\alpha(\hull(p_1,p_i,\dots,p_k)),$  so %that 
$\alpha$ is decomposable and monotone.
\end{observation}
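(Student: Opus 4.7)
The plan is to verify the additivity identity $\alpha(P) = \alpha(\hull(p_1,\dots,p_i)) + \alpha(\hull(p_1,p_i,\dots,p_k))$ under the general position hypothesis, and then exhibit an explicit constant-time computable $\beta$ that witnesses both decomposability and monotonicity of $\alpha = \disc$.

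First, I would observe that since $P$ is convex and $p_1, p_i$ are vertices of $P$, the chord from $p_1$ to $p_i$ splits $P$ into two convex sub-polygons $P_1 = \hull(p_1,\dots,p_i)$ and $P_2 = \hull(p_1,p_i,\dots,p_k)$ satisfying $P = P_1 \cup P_2$ and $P_1 \cap P_2 = \overline{p_1 p_i}$, the closed line segment joining the two vertices. In the context of Algorithm~\ref{alg:convexity_learner} the vertices of $P$ all lie in $N$, so the assumption that $N \cup S$ is in general position applies to the line through $p_1$ and $p_i$. For each label $y \in \{0,1\}$ and any region $R \subseteq \R^2$, write $n_y(R) = |\{(x, y') \in S : x \in R \text{ and } y' = y\}|$. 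Inclusion--exclusion on $P = P_1 \cup P_2$ yields
\[
n_y(P) \;=\; n_y(P_1) + n_y(P_2) - n_y(\overline{p_1 p_i}).
\]
Because $p_1, p_i \in N$ and $N \cup S$ is in general position, no point of $S$ lies on the line through $p_1$ and $p_i$; hence $n_y(\overline{p_1 p_i}) = 0$ for both $y \in \{0,1\}$. Subtracting the equation for $y=0$ from the one for $y=1$ gives $\disc(P) = \disc(P_1) + \disc(P_2)$, which is exactly the required additivity.

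To finish, I would set $\beta(a_1, a_2, u, v) \eqdef a_1 + a_2$. This is constant-time computable, ignores its geometric arguments $u, v$, and the identity above is precisely the decomposability condition for every valid split index $i$. Monotonicity is immediate since $\beta$ is strictly increasing in each of its first two arguments. The only subtle step in the whole argument is the vanishing of the diagonal contribution, for which the general position hypothesis is used; in Algorithm~\ref{alg:convexity_learner} the set $N$ is drawn from a continuous distribution, so general position holds with probability one and the observation applies in the algorithmic context.
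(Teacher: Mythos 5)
Your proof is correct and follows the same route the paper takes implicitly: the paper simply asserts the additivity $\alpha(P)=\alpha(\hull(p_1,\dots,p_i))+\alpha(\hull(p_1,p_i,\dots,p_k))$ under general position and leaves the details unstated, while you supply them via inclusion--exclusion over the split along the diagonal $\overline{p_1p_i}$ and the observation that general position forces the diagonal to contain no sample point. Your explicit choice $\beta(a_1,a_2,u,v)=a_1+a_2$ correctly witnesses both decomposability and monotonicity, so nothing is missing.
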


\ifnum\iclr=1
\subsection{Valtr's theorem}\label{appendix:valtr}
}
\valtrThm
\fi

\ifnum\iclr=1
\subsection{The Proof of Theorem~\ref{thm:convexity-agnostic-learner-constant-probability}}\label{sec:convexity-constant-probability}\label{sec:completing-convexity-proof}

\proofConvLearner
\fi

\iffalse
\begin{lemma}\label{lem:conv_comp}
\talya{The query complexity of Algorithm~\ref{alg:convexity_learner} is $O(s)=O(\frac{1}{\eps^{3.25}}\ln^{4}\frac{1}{\eps})$.
The running time is $O(s^2+n^3\log s)=O(\frac{1}{\eps^7.5}\ln^{10}\frac{1}{\eps})$.}
\end{lemma}
\begin{proof}
The only sampled sets are $S$ and $N$ so the sample complexity is $O(s+n)=O(s)=O(\frac{1}{\eps^{3.25}}\ln^{4}\frac{1}{\eps})$.

By Theorem~\ref{thm:triangle_weighting_structure}, the preprocessing of the set $S$ that builds a data structure that allows performing triangle weight queries w.r.t. to $S$,  takes $O(s^2)$ time. Then, for every geometric triangle $T$, a query of the weight of $T$ w.r.t. to $S$ can be answered in $O(\ln s)$ time. 

By Theorem~\ref{thm:max_discr_island}, 
the running time of algorithm $\optIslands$ is $O(n^3\cdot \ln s)$,
as the data structure is queried $O(n^3)$ times. Therefore, 
the overall running time is $O(s^2+n^3\ln s)$ as stated.
\end{proof}
\fi

\fi
\end{document}